\titleformat*{\section}{\large\bfseries} 
\titleformat*{\subsection}{\large\bfseries} 
\newcommand{\n}{\noindent}
\newcommand{\be}{\begin{equation}}
\newcommand{\ee}{\end{equation}}
\newcommand{\ben}{\begin{displaymath}}
\newcommand{\een}{\end{displaymath}}
\newcommand{\vs}{\vspace{0.2cm}}
\newtheorem{Definition}{Definition}
\newtheorem{Proposition}{Proposition}
\newtheorem{Theorem}{Theorem}
\newtheorem{Lemma}{Lemma}
\newtheorem{AProposition}{Aux-Proposition}
\newtheorem*{T1}{Theorem \ref{T1}}
\newcommand{\gcur}{\kappa}
\newcommand{\meanc}{{\rm tr}_{h}\Theta}
\begin{document}

\n {\huge Instability of the extreme Kerr-Newman 

\vs
\n black-holes}

\vspace{.1cm}

\vspace{.3cm}
\n {\sc Martin Reiris}\\
\n {\small email: martin@aei.mpg.de}\\

\n \textsc{Max Planck Institute f\"ur Gravitationsphysik \\ Golm - Germany}\\

\n \begin{minipage}[l]{11cm}
\begin{spacing}{.9}{\small
Using black-hole inequalities and the increase of the horizon's areas, we show that there are arbitrarily small electro-vacuum perturbations of the standard initial data of the extreme Reissner-Nordstr\"om black-hole that, (by contradiction), cannot decay in time into any extreme Kerr-Newman black-hole. This proves the expectation that the family of extreme Kerr-Newman black-holes is unstable. It remains of course to be seen whether the whole family of charged black-holes, including those extremes, is stable or not. }
\end{spacing}
\vspace{.3cm}
{\sc PACS}:\hspace{.1cm} 02.40.Hw,\hspace{.1cm} 02.40.Ma,\hspace{.1cm} 04.20.-q.
\end{minipage}

\vs
\section{Introduction}
In this article it is proved that the family of the so called maximal Kerr-Newman black-holes is unstable. To be concrete it is proved that  
that there are arbitrarily small electro-vacuum perturbations of the standard initial data of the extreme Reissner-Nordstr\"om black-hole that cannot decay in time into any extreme Kerr-Newman black-hole. 

To bring more accuracy to this introduction let us start reviewing the mathematics and the qualitative properties of the extreme black-holes.
The Lorentzian metric of the extreme Kerr-Newman (EKN) space-time  of electric charge $Q_{\rm E}$, magnetic charge $Q_{\rm M}$, angular momentum $J$ and mass $m^{2}=(Q^{2}+\sqrt{4J^{2}+Q^{4}})/2\neq 0$, ($Q^{2}=Q_{\rm E}^{2}+Q_{\rm M}^{2}$), is given by 
\begin{align}\label{EKNM}
\mathbf{g}=&-\frac{\Delta-a^2\sin^2\theta}{\Sigma}dt^2-\frac{2a\sin^2\theta}{\Sigma}(r^2+a^2-\Delta)\, dt\, d\phi \\
\nonumber &+\frac{(r^2+a^2)^2-\Delta a^2\sin^2\theta}{\Sigma}\sin^2\theta\, d\phi^2
+\frac{\Sigma}{\Delta}\, dr^2+\Sigma\, d\theta^2, 
\end{align}
where $a=J/m$, $\Sigma=r^2+a^2\cos^2\theta$, and $\Delta=r^2+a^2+Q^2-2mr$, (see for instance \hspace{-.1mm}\cite{MR0465047}).
The coordinate $t$ ranges in $(-\infty,\infty)$, $r$ in $(m,\infty)$ and $(\theta,\varphi)$ are the standard coordinates of the unit sphere $\mathbb{S}^{2}$. The space-time ${\bf M}$ is therefore diffeomorphic to $\mathbb{R}\times \mathbb{R}\times \mathbb{S}^{2}$.
The electromagnetic potential ${\bf A}$ is given explicitly by 
\begin{equation*}\label{A}
{\bf A}=-\frac{Q_{\rm E}\, r}{\Sigma}\big(\, dt-a\sin^2\theta\, d\phi\, \big)+\frac{Q_{\rm M}\cos\theta}{\Sigma}\big(\, a\, dt - (r^2+a^2)\, d\phi\, \big)
\end{equation*}
and recall that the electromagnetic tensor is ${\bf F}_{ab}=\boldsymbol{\nabla}_{a}{\bf A}_{b}-\boldsymbol{\nabla}_{b} {\bf A}_{a}$, 
\footnote{Note that ${\bf A}$ is not smooth at $\{\theta=0\}\cup \{\theta=\pi\}$. In this article smooth means $C^{\infty}$.}.
The solution is rotational symmetric and stationary. Of particular interest for this article are the EKN solutions with $J=0$, $Q_{\rm M}=0$ but $Q_{\rm E}\neq 0$, which are called extreme Reissner-Nordstr\"om (ERN). When $Q_{\rm E}=1$ the ERN metric (from now on ERN$_{1}$) takes the synthetic form 
\begin{align}\label{RNM}
{\bf g}=-\big(1-1/r\big)^{2}dt^{2}+\frac{1}{\big(1-1/r\big)^{2}}dr^{2}+r^{2}d\Omega^{2}
\end{align}
and the electromagnetic potential simplifies to ${\bf A}=-dt/r$. Over the Cauchy hypersurface $\{t=0\}$ the electric field is $E_{a}={\bf F}_{ab}\, {\bf n}^{b}=\partial_{r}/(r^{2}|\partial_{r}|)$ and the magnetic field is zero, i.e. $B_{a}={\bf \star} {\bf F}_{ab}\, {\bf n}^{b}=0$. Here ${\bf n}$ is the time-like unit normal to $\{t=0\}$. The solution is time symmetric and therefore the second fundamental form $K$ of the slice $\{t=0\}$ is zero. Finally the solution is spherically symmetric and static. For future reference the data set over $\Sigma_{0}:=\{t=0\}$ will be called the {\it standard initial data of the} ERN$_{1}$ {\it solution} and denoted by $(\Sigma_{0};g_{0},K_{0};E_{0},B_{0})$.
\begin{figure}[h]
\centering
\includegraphics[width=8cm,height=6cm]{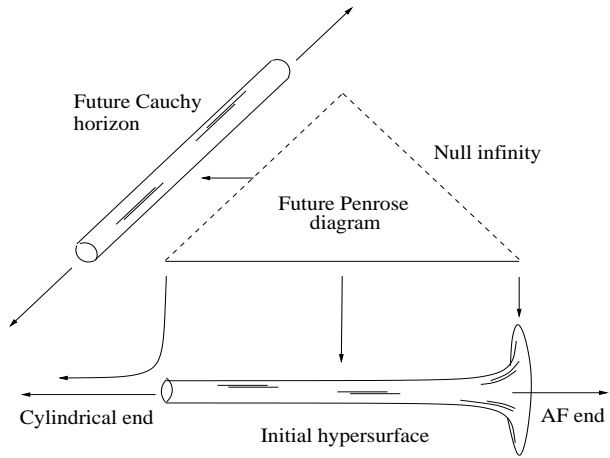}
\caption{Picture of the (half) Penrose diagram of the EKN black-holes. The picture shows also a visualization of the geometry of the standard initial data and the future Cauchy horizon.}
\label{Figure1}
\end{figure}

The EKN solutions form part of the larger family of Kerr-Newman (KN) space-times and lie exactly between those KN space-times representing black-holes and those exhibiting naked singularities.  
Due to their special properties, the EKN solutions have played a peculiar role in the mathematical and physical analysis of black-holes. Some of their most noticeable features are the following. 
The past and the future null infinity of the ERN space-time can be reached from any of its space-time points. Yet the ERN space-time is geodesically incomplete and exhibits future and past Cauchy horizons. Each Cauchy horizon is diffeomorphic to $\mathbb{R}\times \mathbb{S}^{2}$, has complete null generators and the area of any spherical section is
\ben
A=4\pi\sqrt{4|J|+Q^{2}}
\een
In particular, if an extreme solution has $Q_{\rm E}=1$ then to be the one with $Q_{\rm E}=1$, $Q_{\rm M}=0$ and $J=0$ it is necessary and sufficient that  $A=4\pi$.
Moreover the ``initial" Cauchy hypersurface $\{t=0\}$ is maximal and complete (as a Riemannian manifold), and possess no trapped region. This hypersurface is diffeomorphic to $\mathbb{R}\times \mathbb{S}^{2}$ and has one cylindrical end and one asymptotically flat (AF) end (see Figure \ref{Figure1}). 
Of special interest to us is the cylindrical space-time of the ERN$_{1}$ solution (Bertotti's space-time). It is found by taking a sequence $r_{i}\rightarrow 1$, making then the change of variables $\bar{x}=\ln \big((r-1)/(r_{i}-1)\big)$, $\bar{t}=(r_{i}-1)t$ in (\ref{RNM}), and finally taking the limit as $r_{i}\rightarrow 1$. This gives the result
\be\label{RNMT}
\check{\bf g}=-e^{2\bar{x}}d\bar{t}^{2}+d\bar{x}^{2}+d\Omega^{2}
\ee
The three-metric over $\{\bar{t}=0\}$ is then $\check{g}_{0}=d\bar{x}^{2}+d\Omega^{2}$, that is, that of the metric product $\mathbb{R}\times \mathbb{S}^{2}$, hence cylindrical. For future reference, over this slice the electric field is $\check{E}_{0}=\partial_{\bar{x}}$ and the magnetic field $\check{B}_{0}$ and the second fundamental form $\check{K}_{0}$ are zero. The data set $(\mathbb{R}\times \mathbb{S}^{2};\check{g}_{0},\check{K}_{0};\check{E}_{0},\check{B}_{0})$ will be called the {\it standard initial data of the extreme} RN$_{1}$ {\it throat} (ERNT$_{1}$).

It is fundamentally the presence of these peculiar Cauchy horizons what makes extreme solutions so special. Are extreme black-holes physically realistic solutions? Are they stable under small perturbations of the initial data?  What occurs to their horizons under such perturbations? 

A revitalized interest in these old questions reappeared in the last years as a part of new and larger mathematical investigations on the stability of black-hole space-times, \cite{Aretakis:2012ei},\cite{Murata:2013daa},\cite{Dafermos:2013bua},\cite{Bizon:2012we}, \cite{Dain:2012qw} (to mention some). Most of these theoretical developments are characterized by the use of linear techniques over the otherwise unperturbed ERN background.  
As a contribution to the ongoing discussion we prove here that there are arbitrarily small perturbations of the standard ERN$_{1}$ initial data whose evolution cannot decay in any way into any EKN solution. The proof is satisfactory to us in that it is the result of combining black-hole inequalities \cite{Dain:2011mv},\cite{Dain:2011kb}, and the ubiquitous law of area increase of event horizons \cite{Chrusciel:2000cu}, and does not rely in any linear or linearization technique. 
In a sense, our argument belongs to a class of natural procedures to prove instabilities that was used in the literature during the last years \footnote{I would like to thank Piotr Chrusciel for making this remark to me.}
and which consists in finding certain inequalities at the level of the perturbed initial data that are shown to be propagated along the evolution and that are incompatible with the stationary states that one wants to rule out as the long time limit of the evolution (see for instance \cite{Figueras:2011he} and references therein).
\begin{figure}[h]
\centering
\includegraphics[width=10cm,height=4cm]{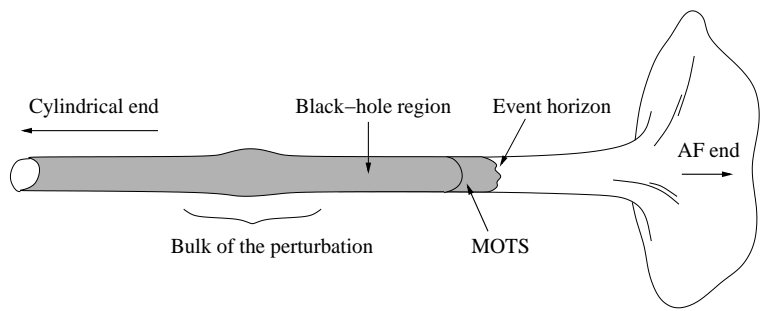}
\caption{Diagram of the initial data used in this article.}
\label{Figure2}
\end{figure}

Before we pass to explain the generalities behind the proof, let us explain in precise terms the main statement to be proved. We first introduce the notion of ``perturbation'' of the standard initial data $(\Sigma_{0};g_{0},K_{0};E_{0},B_{0})$ of the ERN$_{1}$ space-time. 
\begin{Definition}\label{DPER} Let $(\Sigma;g,K;E,B)$ be a smooth an maximal electro-vacuum data set and let $k$ be an integer greater or equal than $1$. We say that the data set is $\varepsilon$-close in $C^{k}$ to the ERN$_{1}$ standard initial data iff there is a diffeomorphism $\varphi: \Sigma_{0}\rightarrow \Sigma$ such that for any $(U, U_{0})$ equal to either $(g,g_{0}),\ (K, K_{0}),\ (E,E_{0})$ or $(B, B_{0})$
we have
\ben
\big\|\, \varphi^{*} U - U_{0}\, \big\|_{C^{k}_{g_{0}}(\Sigma_{0})}\leq \varepsilon.
\een
\end{Definition} 
\n The $C^{k}_{g_{0}}$ norm of a tensor $W$ (no matter its valence) is defined as usual by
\ben
\big\|\, W\, \big\|^{2}_{C^{k}_{g_{0}}(\Sigma_{0})}=\sup_{p\in \Sigma_{0}}\, \bigg[\sum_{j=0}^{j=k}\, \big|\big(\nabla^{(j)}\, W\big) (p)\big|^{2}_{g_{0}}\bigg]
\een
The Definition \ref{DPER} is satisfactory but we need to make sure that the perturbation ``falls off'' along the asymptotically cylindrical end and that the ``cylindrical asymptotic'' is preserved. To be concrete we will work with perturbations that ``fall off exponentially along the cylindrical end into the ERN$_{1}$ standard initial data''. Precisely,
we say that a data set $(\Sigma;g,K;E,B)$, $\varepsilon$-close in $C^{k}$ to $(\Sigma_{0};g_{0},K_{0};E_{0},B_{0})$, {\it falls off exponentially into} $(\Sigma_{0};{g}_{0},{K}_{0};{E}_{0},{B}_{0})$ {\it along the cylindrical end} iff there is $\Lambda>0$ such that for any $(U,U_{0})$ equal to either $(g,{g}_{0}), (K,K_{0}), (E,E_{0})$ or $(B,B_{0})$ we have
\ben
\lim_{r(p)\rightarrow 1}\ e^{\displaystyle \Lambda \ln (r-1)}\, \bigg[\sum_{j=0}^{j=k}\, \big|\big(\nabla^{(j)}\, (\varphi^{*} U -U_{0}) \big) (p)\big|^{2}_{g_{0}}\bigg]=0,
\een  
where $\varphi^{*}$ is the pull-back by the diffeomorphism $\varphi:\Sigma_{0}\rightarrow \Sigma$ (note that $r(p)\rightarrow 1$ means that ``$p$'' diverges along the cylindrical end).

With all these definitions at hand we can state our main result as follows.
\begin{T1} For any $\bar{\varepsilon}>0$ and integer $k\geq 1$ there is a smooth and maximal electro-vacuum data set $(\bar{\Sigma};\bar{g},\bar{K};\bar{E},\bar{B})$, $\bar{\varepsilon}$-close in $C^{k}$ to the standard ERN$_{1}$ initial data and falling into it exponential along the cylindrical end, which cannot decay, towards the future or the past, into any EKN solution.
\end{T1}

Let us overview now the arguments behind the proof. Technical but important information has to be found inside the text. The argument that follows can be done in any time direction. The idea is to construct (arbitrarily small) axisymmetric perturbations of the standard ERN$_{1}$ initial data and do so with sufficiently control to be able to prove that a Marginally Outer Trapped Surface (MOTS) forms separating the two ends (see Figure \ref{Figure2}). In addition, the perturbation is done keeping $Q_{\rm E}=1$, $Q_{\rm M}=0$ and $J=0$. 
In particular, and because the electromagnetic charges and the angular momentum are conserved, if the perturbation evolves into an EKN space-time in the long-time, then it must be one with $Q_{\rm E}=1$, $Q_{\rm M}=0$ and $J=0$, that is, it has to be the ERN that is being perturbed \footnote{To be certain here, the charges and the angular momentum are not only conserved at null infinity, they take also the same values over any embedded sphere isotopic to a ``sphere'' at ``spatial infinity''. This is explained in Section \ref{BACKM}.}.
Moreover, due to presence of a MOTS which acts as a barrier,  the event horizon must intersect the initial Cauchy hypersurface somewhere between the MOTS and the asymptotically flat end. In parallel to all this it is shown that every surface $S$ embedded in the initial hypersurface and separating the two ends has area strictly greater than $4\pi$. In particular the intersection of the event horizon and the initial hypersurface must have area strictly greater than $4\pi$. As the areas of sections of the event horizon are non-decreasing in time, we conclude that the initial data cannot evolve into the ERN$_{1}$ solution because its horizon has area exactly $4\pi$. The perturbed data set is depicted in Figure \ref{Figure2} and the (presumed) evolution in Figure \ref{Figure3}.  

Like any argument by contradiction, the one before does not say what indeed occurs during the time evolution. It just says something of what cannot happen. Nevertheless, the presence of the mentioned MOTS in the perturbed initial data suggests that it must decay in the long-time into a non-extremal KN black-hole. For this reason it is expected also that whatever occurs to the ``old'' horizon of the ERN$_{1}$, that part of the space-time stays hidden inside the new black-hole region.  
Regardless of that, this work doesn't yield any light about the fate of the ERN horizon under perturbations. In this sense it doesn't make previous investigations about the ERN horizon less interesting.

In principle, with further work but following a similar argument, one should be able to prove that there are arbitrarily small perturbations of any EKN that cannot decay in any way into an EKN black-hole. What makes the use of the ERN and not of any other EKN solution more useful is that the perturbations can be made time-symmetric and for this reason proving the existence of a MOTS reduces to proving the existence of a minimal surface which is technically more accessible \footnote{I would like to thank Sergio Dain for pointing this out.}. 
 
The organization of this article is the following. In Section \ref{BACKM} we recall the basic material to be used about electro-vacuum space-times. In Section \ref{IMID} we discuss black-hole inequalities on data sets that we call of the ERN$_{1}$ ``type'' and that are introduced in Definition \ref{DERNT}. Roughly speaking, such data sets are defined to share the topology and the asymptotic geometry of the standard initial data of the ERN$_{1}$ solution. Not surprisingly the perturbations of the standard initial data of the ERN$_{1}$ solution that we are going to use are of the ERN$_{1}$ type. The main result of this section is to prove that the area of any (compact, boundaryless and embedded) surface separating the two ends of any data set of the ERN$_{1}$ type is strictly greater than $4\pi$.  The analysis in this section shares many elements with \cite{ReirisZTBH}. In Section \ref{SECDS} we construct the mentioned initial perturbations using the conformal method. The existence of solutions of the conformal equations is proved following standard barrier methods \cite{Chrusciel-Mazzeo} which give good control on the solutions. In Section \ref{RIGS} we show the rigidity of the ERNT$_{1}$ initial data which will be necessary in Section \ref{SECDSU} to show that one can make arbitrarily small perturbations containing MOTS. It is worth mentioning that the rigidity of the ERNT$_{1}$ initial data is of interest in interest. In particular the formation of extreme RN throats along sequence of data sets can be studied in the same way as was done in \cite{ReirisZTBH} with the formation of extreme Kerr-throats. 
The proof of the main result following the lines explained above is made formally and finally in Section \ref{PMR}.
\section{Background material}\label{BACKM} 

In this section we recall succinctly and with certain formality those notions, like that of electric and magnetic charges, that will be necessary throughout the article. The formal treatment is justified by the mathematical nature of the paper. 
 
We will be working with smooth electro-vacuum space-times $({\bf M};{\bf g};{\bf F})$, where $({\bf M};{\bf g})$ an orientable and time orientable Lorentzian manifold. We will assume that an orientation on ${\bf M}$ was chosen and that a future direction was assigned. Let $\Sigma$ be a space-like hyper-surface and ${\bf n}$ a future unit normal to $\Sigma$. As usual, the orientation on ${\bf M}$ and the field ${\bf n}$ provide an orientation on $\Sigma$, more precisely: $\{e_{1}(p),e_{2}(p),e_{3}(p)\}$ is a positive basis of $T_{p}\Sigma$ iff $\{{\bf n}(p),e_{1}(p),e_{2}(p),e_{3}(p)\}$ is a positive basis of $T_{p}{\bf M}$.
Space-times tensors, like the Ricci curvature ${\bf Ric}$ of ${\bf g}$, will be boldfaced.  

\vs
\n {\bf (i) The Einstein-Maxwell system.}

\vs
\n In coordinate-independent form the Einstein-Maxwell equations are
\be\label{EME}
{\bf Ric}-\frac{1}{\displaystyle 2}\, {\bf R}\, {\bf g}=8\pi {\bf T},\quad {\rm d}\, {\bf F}=0,\quad \text{and}\quad {\rm d}\, {\bf \star}\, {\bf F}=0
\ee
where ${\rm d}$ is the exterior derivative and $\star$ is the ${\bf g}$-Hodge star, namely ${\bf \star} {\bf F}_{ab}={\bf \epsilon}_{abcd}{\bf F}^{cd}/2$. The electromagnetic energy-momentum tensor ${\bf T}$ appearing in (\ref{EME}) is
\ben
{\bf T}_{ab}=\frac{1}{4\pi}\big({\bf F}_{ac}{\bf F}_{b}^{\ c}-\frac{1}{4}{\bf F}_{cd}{\bf F}^{cd} {\bf g}_{ab}\big).
\een

The 3+1 picture of (\ref{EME}) will be also used during the article. We recall it in what follows \cite{MR583716}. Let $\Sigma_{0}$ be a space-like hyper-surface (possibly with boundary) and ${\bf V}$ a nowhere zero time-like vector field defined on an open neighborhood of $\Sigma_{0}$. 
By moving $\Sigma_{0}$ along ${\bf V}$ one obtains a flow of space-like hypersurfaces $\Sigma_{t}$ (at least for a short time). Coordinates charts $(x^{1},x^{2},x^{3})$ are propagated by ${\bf V}$ to every $\Sigma_{t}$ and any two $\Sigma_{t}$ and $\Sigma_{t'}$ are naturally diffeomorphic. In this way one obtains a flow $(g_{ij}(t),K_{ij}(t))$ of induced three-metrics and second fundamental forms on the fixed manifold $\Sigma_{0}$. Writing ${\bf V}|_{\Sigma_{t}}=N(t){\bf n}+X^{i}(t)\partial_{i}$, where ${\bf n}$ is a future unit normal to $\Sigma_{t}$, one obtains also a flow of lapse functions $N(t)$ and shift vectors $X(t)=X^{i}\partial_{i}$. In this $3+1$ setup the Einstein equation (first eq. in (\ref{EME})) is    
\ben
\left\{
\begin{array}{l}
\dot{g}_{ij}=-2NK_{ij} +{\mathcal L}_{X} g_{ij},\vs \\
\dot{K}_{ij}=-\nabla_{i}\nabla_{j} N +N(Ric_{ij}-2K_{il}K^{l}_{\ j}) + {\mathcal L}_{X} K_{ij} - 8\pi N({\bf T}_{ij}+\frac{1}{2} ({\bf T}_{ab}{\bf g}^{ab})g_{ij}),\vs \\
R=|K|^{2}-k^{2}+16\pi {\bf T}_{00},\vs\\
\nabla^{i}K_{ij}-\nabla_{j}k=8\pi {\bf T}_{0i},
\end{array}
\right.
\een
where ${\bf T}_{00}={\bf T}({\bf n},{\bf n})$ and ${\bf T}_{0i}={\bf T}({\bf n},\partial_{i})$, $\nabla$ is the $g$-covariant derivative, $k={\rm tr}_{g} K$ is the mean curvature and ${\mathcal L}$ is the Lie-derivative. The space-time metric is written in the form 
\ben
{\bf g}=-(N^{2}-X_{i}X^{i})dt^{2}+X_{i}(dt\otimes dx^{i}+dx^{i}\otimes dt)+g_{ij}dx^{i}dx^{j}
\een 
At every slice $\Sigma_{t}$, the electric and magnetic fields $E$ and $B$, are defined by $E^{i}={\bf F}^{i}_{\ a}{\bf n}^{a}$ and $B^{i}={\bf \star F}^{i}_{\ a}{\bf n}^{a}$. In terms of them the electro-vacuum constraint equations are  
\be\label{CEEM}
\left\{
\begin{array}{l}
R=|K|^{2}-k^{2}+2\big(|E|^{2}+|B|^{2}\big),\vs\\
\nabla^{i}K_{ij}-\nabla_{j}k=2(E\times B)_{j},\vs\\
\nabla^{i} E_{i}=0,\vs \\ 
\nabla^{i}B_{i}=0
\end{array}
\right.
\ee
where $(E\times B)_{j}=\epsilon_{ijk}E^{j}B^{k}$. A set $(g,K;E,B)$ satisfying the constraint equations (\ref{CEEM}) on a manifold $\Sigma$ is called an {\it electro-vacuum data set}. The data is maximal if $k={\rm tr}_{g} K=0$.  
 
\vs
\n {\bf (ii) The electric and magnetic charges.} 

\vs
\n Let $[S]$ be an oriented, compact and boundaryless surface $S$ embedded in ${\bf M}$. The bracket $[\ ]$ signifies that an orientation on $S$ has been assigned. Then $Q_{\rm E}([S])$ and $Q_{\rm M}([S])$ are defined by
\ben
Q_{\rm E}([S])=-\frac{1}{4\pi} \int_{[S]} {\bf \star F}\quad\text{and}\quad Q_{\rm M}([S]):=-\frac{1}{4\pi} \int_{[S]} {\bf F}
\een
As ${\rm d}\, {\bf F}=0$ and ${\rm d}\, {\bf \star F}=0$ then $Q_{\rm E}([S])$ and $Q_{\rm M}([S])$ depend only on the homology class of $[S]$. We will be referring this fact as the {\it conservation of charge}.  
If $S$ is embedded in a space-like hypersurface $\Sigma$ then $Q_{\rm E}([S])$ and $Q_{\rm M}([S])$ take the more familiar expressions  
\be\label{ECH}
Q_{\rm E}([S]):=\frac{1}{4\pi}\int_{S} <E,\zeta>\, dA\quad \text{and}\quad Q_{\rm M}([S]):=\frac{1}{4\pi}\int_{S} <B,\zeta>\, dA
\ee
where $<E,\zeta>=E^{i}\zeta^{j}g_{ij}$ and where $\zeta$ the unit normal field to $S$ in $\Sigma$ such that if $\{e_{2}(p),e_{3}(p)\}$ is a positive basis for $T_{p} S$ then $\{n(p),\zeta(p),e_{2}(p),e_{3}(p)\}$ is a positive basis for ${\bf M}$. 
Observe that if $[S]$ and $[S']$ are homologous in $\Sigma$ (and therefore in ${\bf M}$) then the conservations $Q_{\rm E}([S])=Q_{\rm E}([S'])$ and $Q_{\rm M}([S])=Q_{\rm M}([S'])$ can be seen also as a consequence of the laws ${\rm div}\, E=0$ and ${\rm div}\, B=0$ (${\rm div}\, U=\nabla^{i} U_{i}$). 

In this context, the total charges $Q_{\rm E}$ and $Q_{\rm M}$ that show up in the metric expression (\ref{EKNM}) of the EKN solutions are of course the electric and magnetic charges of any sphere with $t$ and $r$ constant and oriented using the outgoing normal $\zeta=\partial_{r}/|\partial_{r}|$ \footnote{Assume $\{\partial_{t},\partial_{r},\partial_{\theta},\partial_{\varphi})$ is positive for ${\bf M}$.}. 

It is the case that the normal $\zeta$ will be given from the context (or simply will not matter). For this reason we will often write $Q_{\rm E}(S)$ and $Q_{\rm M}(S)$. 

\vs
\n {\bf (iii) Angular momentum in electro-vacuum space-times.} 

\vs
\n Suppose now that the electro-vacuum space-time $({\bf M};{\bf g};{\bf F})$ is axisymmetric and that ${\bf F}={\rm d} {\bf A}$ with the potential ${\bf A}$ axisymmetric \footnote{If ${\bf F}$ is exact then an axisymmetric potential $A$ can always be found by averaging any potential by the rotational group $U(1)$. Observe too that ${\bf F}$ is exact iff all the magnetic charges (i.e. $Q_{\rm M}([S])=0$ for all $S$) are zero.}. Denote by $\xi$ the axisymmetric Killing field. Then the angular momentum of an oriented and axisymmetric (compact and boundaryless) surface $[S]$ is \cite{MR0465047}
\be\label{ANGMOM}
J([S]):=\frac{1}{8\pi}\int_{[S]} {\bf \star} (\boldsymbol{\nabla}_{a}\xi_{b}) + \frac{1}{4\pi}\int_{[S]} ({\bf A}^{a}\xi_{a}) {\bf \star F}
\ee  
The angular momentum is conserved too \cite{MR0465047}. Namely if $[\Sigma]$ is an oriented compact and axisymmetric hypersurface of ${\bf M}$ and $\partial [\Sigma]=[S]-[S']$ then $J([S])=J([S'])$. 

If $S$ is embedded in an axisymmetric Cauchy hypersurface $\Sigma$, then the first term in (\ref{ANGMOM}) (which is the Komar angular momentum) reduces to the standard form $(\int_{S}K(\xi,\zeta)dA)/8\pi$ and is therefore zero when $K=0$. If in addition $B=0$ over $\Sigma$ then the second term in (\ref{ANGMOM}) is also zero. To see this use the axisymmetry of ${\bf A}$ to get $\xi^{a}{\bf F}_{ai}=\nabla_{i} {\bf A}(\xi)$ and to conclude that ${\bf A}(\xi)$ must be a constant over $\Sigma$. When $S$ is in addition a sphere then the constant must be zero because ${\bf A}(\xi)$ must vanish at the axes. This information shows that the perturbations constructed in Section \ref{SECDS}, which have $K=0$ and $B=0$, also have total angular momentum $J$ equal to zero.  

\vs
\n {\bf (iv) The stability inequality of minimal surfaces embedded in maximal data sets.} 

\vs
\n Let $(\Sigma;g,K;E,B)$ be an electro-vacuum data set and suppose that $S$ is a (compact, boundary-less and orientable) minimal surface embedded in $\Sigma$. Recall that a surface $S$ is said minimal inside $(\Sigma;g)$ if its mean curvature is identically zero. Let $\zeta$ be a unit normal vector field to $S$ in $\Sigma$ and let  $\alpha:S\rightarrow \mathbb{R}$ be a smooth function. 
The first variation of area when $S$ is deformed along $\alpha\zeta$ is zero by minimality. Instead, the second variation is \cite{MR2780140}
\be\label{SIM}
A''_{\alpha}(S):=\int_{S}\big[\, |\nabla \alpha|^{2}-\big(|\Theta|^{2}+Ric(\zeta,\zeta)\big)\,\alpha^{2}\big]\, dA,
\ee  
where here $\Theta$ is the second fundamental form of $S$. The surface $S$ is said to be stable if $A''_{\alpha}(S)\geq 0$ for all $\alpha$. In 
dimension three the r.h.s of (\ref{SIM}) is simplified due to the identity $2\kappa =(\meanc)^{2}-|\Theta|^{2}+R-2Ric(\varsigma,\varsigma)$, where $\kappa$ is the Gaussian curvature of $S$ (with its induced metric). Using this expression, the minimality of $S$ (i.e. $\meanc=0$) and the energy constraint we deduce that if $S$ is stable then for any $\alpha$ we have
\be\label{SIMD} 
\int_{S} \big(\, |\nabla \alpha|^{2}+\kappa \alpha^{2}\, \big)\, dA\geq  \frac{1}{2} \int_{S} \big( 2|E|^{2}+2|B|^{2}+|K|^{2}+|\Theta|^{2}-k^{2} \big)\, \alpha^{{2}}\, dA.
\ee 

\section{Black-holes inequalities in maximal data sets} \label{IMID}

\begin{Definition} We say that a sphere $S$ embedded in a maximal electro-vacuum data set $(\Sigma; g,K;E,B)$ is a (normalized) extreme RN sphere if over $S$ we have
\be\label{ERNE}
\gcur=1,\qquad E= \zeta,\qquad B=0,\qquad \Theta=0,\qquad\text{and}\qquad K=0,
\ee
where $\gcur$ is the Gaussian curvature, $\zeta$ is a unit normal to $S$ in $\Sigma$ and $\Theta$ is the second fundamental form of $S$ in $(\Sigma;g)$.
\end{Definition}

Normalized extreme RN spheres $S$ are totally geodesic and have $|Q_{\rm E}(S)|=1$, $Q_{\rm M}(S)=0$ and $A(S)=4\pi$.

The following lemma discusses the equality case in the general inequality $A\geq 4\pi Q_{\rm E}^{2}$ and that was not treated in \cite{Dain:2011kb}. 

\begin{Lemma}\label{RNHG}
Let $S$ be a stable (compact, boundaryless and orientable) minimal surface embedded in a maximal electro-vacuum data set and having $A(S)=4\pi$ and $|Q_{\rm E}(S)|=1$. Then, $S$ is a (normalized) extreme RN sphere.
\end{Lemma}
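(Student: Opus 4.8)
The plan is to read everything off the stability inequality (\ref{SIMD}), evaluated on well-chosen test functions, after first converting the two hypotheses $A(S)=4\pi$ and $|Q_{\rm E}(S)|=1$ into a lower bound for $\int_{S}|E|^{2}\,dA$. Since the data set is maximal, $k=\mathrm{tr}_{g}K=0$, so (\ref{SIMD}) becomes
\ben
\int_{S}\big(|\nabla\alpha|^{2}+\kappa\,\alpha^{2}\big)\,dA\ \geq\ \int_{S}\Big(|E|^{2}+|B|^{2}+\tfrac{1}{2}|K|^{2}+\tfrac{1}{2}|\Theta|^{2}\Big)\,\alpha^{2}\,dA
\een
for every smooth $\alpha:S\to\mathbb{R}$. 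Meanwhile, from $Q_{\rm E}(S)=\frac{1}{4\pi}\int_{S}\langle E,\zeta\rangle\,dA$, the pointwise bound $|\langle E,\zeta\rangle|\leq|E|$ (as $\zeta$ is a unit vector) and the Cauchy--Schwarz inequality,
\ben
4\pi=\Big|\int_{S}\langle E,\zeta\rangle\,dA\Big|\ \leq\ \int_{S}|\langle E,\zeta\rangle|\,dA\ \leq\ \int_{S}|E|\,dA\ \leq\ \Big(\int_{S}|E|^{2}\,dA\Big)^{1/2}\big(A(S)\big)^{1/2}=(4\pi)^{1/2}\Big(\int_{S}|E|^{2}\,dA\Big)^{1/2},
\een
which gives $\int_{S}|E|^{2}\,dA\geq4\pi$.

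Next I would take $\alpha\equiv1$ in the stability inequality and apply Gauss--Bonnet (legitimate since $S$ is closed and orientable):
\ben
2\pi\,\chi(S)=\int_{S}\kappa\,dA\ \geq\ \int_{S}|E|^{2}\,dA+\int_{S}|B|^{2}\,dA+\tfrac{1}{2}\int_{S}|K|^{2}\,dA+\tfrac{1}{2}\int_{S}|\Theta|^{2}\,dA\ \geq\ 4\pi .
\een
Hence $\chi(S)\geq2$, forcing $\chi(S)=2$ (so $S$ is a sphere) and forcing every inequality in the last display to be an equality: $\int_{S}|B|^{2}\,dA=\int_{S}|K|^{2}\,dA=\int_{S}|\Theta|^{2}\,dA=0$, hence $B=0$, $K=0$, $\Theta=0$ along $S$; and $\int_{S}|E|^{2}\,dA=4\pi=A(S)$. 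Feeding this last identity back into the Cauchy--Schwarz chain turns all the inequalities there into equalities as well, which forces $|E|$ to be constant, $\langle E,\zeta\rangle$ to have constant sign, and $|\langle E,\zeta\rangle|\equiv|E|$; combined with $\int_{S}|E|^{2}\,dA=A(S)$ this yields $|E|\equiv1$ and $E$ everywhere parallel to $\zeta$, so $E=\pm\zeta$ with a fixed sign, and after replacing $\zeta$ by $-\zeta$ if necessary, $E=\zeta$ on $S$.

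Finally I would upgrade the integral identity $\int_{S}\kappa\,dA=4\pi$ to the pointwise equality $\kappa\equiv1$; this is the only place where the non-constant test functions in (\ref{SIMD}) are really used, and I expect it to be the mild obstacle. With $|E|^{2}\equiv1$ and $B=K=\Theta=0$ on $S$, the stability inequality reduces to $\int_{S}|\nabla\alpha|^{2}\,dA\geq\int_{S}(1-\kappa)\,\alpha^{2}\,dA$ for all $\alpha$, while $\int_{S}(1-\kappa)\,dA=A(S)-\int_{S}\kappa\,dA=0$. Testing with $\alpha=1+\varepsilon\phi$ for arbitrary $\phi$ and letting $\varepsilon\to0^{\pm}$, the terms linear in $\varepsilon$ give $\int_{S}(1-\kappa)\,\phi\,dA=0$ for every $\phi$, hence $\kappa\equiv1$. (Equivalently, the Schr\"odinger operator $-\Delta_{S}+(\kappa-1)$ has nonnegative bottom of the spectrum, and $\alpha\equiv1$ attains the value $0$ in its Rayleigh quotient, so it is a ground state and $(\kappa-1)\cdot1=0$.) Collecting the conclusions, $\kappa=1$, $E=\zeta$, $B=0$, $\Theta=0$ and $K=0$ on $S$, i.e. $S$ is a normalized extreme RN sphere. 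The routine portions are the simplification of (\ref{SIMD}) and the Cauchy--Schwarz bookkeeping; the only care needed is in the equality analysis --- that all the intermediate inequalities are saturated at once --- and in this last pointwise step.
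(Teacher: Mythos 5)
Your proof is correct and follows essentially the same route as the paper's: test the stability inequality (\ref{SIMD}) with $\alpha\equiv 1$, combine with Gauss--Bonnet and the Cauchy--Schwarz estimate $\int_S|E|^2\,dA\geq(4\pi Q_{\rm E})^2/A(S)=4\pi$ to squeeze out $B=K=\Theta=0$, $\chi(S)=2$, $\int_S|E|^2\,dA=4\pi$, then run the equality case of Cauchy--Schwarz to get $E=\zeta$, and finally use stability once more to upgrade $\int_S\kappa\,dA=4\pi$ to the pointwise $\kappa\equiv1$. The only place you diverge is that last pointwise step: you observe that $\alpha\equiv1$ minimizes the Rayleigh quotient of $-\Delta_S+(\kappa-1)$ at the value $0$ and so is a zero eigenfunction (equivalently, you extract the Euler--Lagrange equation from the first variation at $\alpha=1$ using $1+\varepsilon\phi$), whereas the paper takes the nowhere-vanishing first eigenfunction $\alpha_\lambda$, divides the eigenvalue equation by $\alpha_\lambda$, integrates, and uses $\int_S(-\Delta\alpha_\lambda/\alpha_\lambda)\,dA=-\int_S|\nabla\ln\alpha_\lambda|^2\,dA$ together with $\int_S(\kappa-1)\,dA=0$ to force $\lambda=0$ and $\alpha_\lambda$ constant. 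The two closings are interchangeable; yours is slightly more direct, while the paper's makes explicit contact with the standard fact that a nonnegative Schr\"odinger operator on a closed surface with first eigenvalue zero has a constant ground state only when its potential integrates to zero and, in that case, vanishes identically.
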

\begin{proof}[\bf Proof.] Recall from (\ref{SIMD}) that the stability inequality of the area implies
\be\label{SIN}
\int_{S} \big(|\nabla \alpha|^{2} +\gcur \alpha^{2}\big)\, dA\geq \int_{S} \bigg(|E|^{2}+|B|^{2}+\frac{|K|^{2}}{2}+\frac{|\Theta|^{2}}{2}\bigg)\, \alpha^{2}\, dA
\ee
for all $\alpha:S\rightarrow {\mathbb R}$. As $|Q_{\rm E}(S)|=1$ we can select the unit normal field $\zeta$ to $S$ such that $\frac{1}{4\pi}\int_{S} <E,\zeta>\, dA=|Q_{\rm E}(S)|=1$. Choosing $\alpha=1$ in (\ref{SIN}) and using then Gauss-Bonet and that 
\be\label{EESI}
1=|Q_{\rm E}(S)|=\frac{1}{4\pi}\big|\int_{S} <E,\zeta>\, dA\big|\leq \frac{1}{(4\pi)^{1/2}}\bigg(\int_{S} |E|^{2}\, dA\bigg)^{1/2}
\ee
we obtain
\ben
4\pi \geq 4\pi + \int_{S} \bigg(|B|^{2}+\frac{|K|^{2}}{2}+\frac{|\Theta|^{2}}{2}\bigg)\, dA
\een
This shows that $B=0$, $K=0$ and $\Theta=0$ and that equality must hold. Therefore equality must hold also in (\ref{EESI}) which implies (by Cauchy-Schwarz) that $E= \zeta$. It remains to see that $\gcur=1$, i.e. that $S$ has a round metric. Let us show this below.

Using $B=0$, $K=0$, $\Theta=0$ and $E=\zeta$ in (\ref{SIN}) we obtain
\ben
\int_{S} \big(\, |\nabla \alpha|^{2} +(\gcur -1) \alpha^{2}\, \big)\, dA\geq 0
\een
for all functions $\alpha$. This implies that the first eigenvalue $\lambda$ of the operator $\alpha\rightarrow -\Delta\alpha+(\gcur-1)\alpha$ must be non-negative. Denote by $\alpha_{\lambda}$ its eigenfunction (which is unique up to a constant and that is well known to be nowhere zero). Then we have
\be\label{EEQ}
-\Delta\alpha_{\lambda} + (\gcur-1)\alpha_{\lambda}=\lambda \alpha_{\lambda}
\ee
Multiplying by $1/\alpha_{\lambda}$ and integrating over $S$ we obtain
\ben
-\int_{S}|\nabla \ln \alpha_{\lambda}|^{2}\, dA=4\pi \lambda\geq 0
\een
This implies that $\lambda=0$ and that $\alpha_{\lambda}$ is a constant. Using this information in (\ref{EEQ}) we obtain $\gcur=1$ as wished. 
\end{proof}

\begin{Definition}\label{DERNT} A maximal electro-vacuum data set $(\Sigma; g,K;E,B)$ is said to be of the ERN$_{1}$ type if there is a (smooth) diffeomorphism $\varphi:\Sigma_{0}\rightarrow \Sigma$ such that 
\ben
\lim_{p\rightarrow {\mathcal End}} \big|(\varphi^{*} U)(p) - U_{0}(p)\big|_{g_{0}}=0
\een
where $(U,U_{0})$ is any of the pairs $(g,g_{0})$, $(K,K_{0})$, $(E,E_{0})$, $(B,B_{0})$ and $p\rightarrow {\mathcal End}$ means ``as $p$ diverges along the cylindrical end or the asymptotically flat end''. 
\end{Definition}

Observe that we require that $(\varphi^{*}g,\varphi^{*}K;\varphi^{*}E,\varphi^{*}B)$ converges to $(g_{0},K_{0};E_{0},B_{0})$ along the ends only in $C^{0}$. For this reason the ADM masses of both data sets are not necessarily equal. However the total electric and magnetic charges must stay the same as they can be calculated from the formulas (\ref{ECH}) along the divergent sequence of spheres $S_{r_{i}}=\{r=r_{i}\}$ on the cylindrical end. That is, any data set of the ERN$_{1}$ type has total charges $|Q_{\rm E}|=1$ and $Q_{\rm M}=0$.

The next proposition is essentially a particular case of the results in \cite{Dain:2011kb}. We include a proof for a more convenient exposition.

\begin{Proposition}\label{PASO} Let $(\Sigma;g,K;E,B)$ be a maximal electro-vacuum data set of ERN$_{1}$ type. Then every (compact, boundaryless and orientable) embedded surface $S$ which is non-contractible inside $\Sigma$ has 
\be\label{UAI}
|Q_{\rm E}(S)|=1\quad \text{and}\quad A(S)\geq 4\pi.
\ee
\end{Proposition}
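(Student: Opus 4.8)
The plan is to establish the two assertions separately. For the charge equality, I would note that by hypothesis $\Sigma$ has the topology $\mathbb{R}\times\mathbb{S}^2$, so $H_2(\Sigma;\mathbb{Z})\cong\mathbb{Z}$ is generated by a sphere $S_{r_i}$ far out on the cylindrical end (and equally by a large coordinate sphere on the asymptotically flat end). A compact boundaryless orientable surface $S$ that is non-contractible in $\Sigma$ must represent a nonzero multiple $d\,[S_{r_i}]$ of this generator; but $S$ separates the two ends (a surface in $\mathbb{R}\times\mathbb{S}^2$ representing $d$ times the generator separates it into $|d|+1$ pieces, and only $d=\pm 1$ gives exactly two ends — or, more carefully, any separating surface that leaves one end on each side is homologous to $\pm[S_{r_i}]$). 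By the conservation of charge (already noted in Section~\ref{BACKM}, since $\mathrm{d}\,{\bf F}=\mathrm{d}\,{\bf\star F}=0$, or equivalently $\mathrm{div}\,E=\mathrm{div}\,B=0$), $Q_{\rm E}(S)=\pm Q_{\rm E}(S_{r_i})$ and $Q_{\rm M}(S)=\pm Q_{\rm M}(S_{r_i})$, and these limit to the values of the ERN$_1$ standard data along the cylindrical end, namely $|Q_{\rm E}|=1$, $Q_{\rm M}=0$, using the remark after Definition~\ref{DERNT}. This gives $|Q_{\rm E}(S)|=1$.

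For the area bound, I would invoke the general black-hole inequality $A\geq 4\pi Q_{\rm E}^2$ of \cite{Dain:2011kb} applied to $S$. Since we have just shown $|Q_{\rm E}(S)|=1$, this immediately yields $A(S)\geq 4\pi$. Strictly, the inequality of \cite{Dain:2011kb} is proved for \emph{stable} minimal surfaces, so to apply it to an arbitrary separating surface $S$ one should first minimize area in the (nontrivial) homology class of $S$: standard geometric measure theory on the manifold $(\Sigma,g)$ — which is complete with one asymptotically flat end and one asymptotically cylindrical end modeled on $\check g_0$ — produces a smooth embedded area-minimizer $\hat S$ homologous to $S$, provided a barrier argument rules out the minimizer escaping down the cylindrical end or out to infinity. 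On the cylindrical end the model geometry $d\bar x^2+d\Omega^2$ has area functional constant along the $\bar x$ direction and the perturbation decays, so sections of the cylinder act as barriers; near the asymptotically flat end large coordinate spheres are strictly mean-convex barriers. Hence $\hat S$ exists, is stable and minimal, has $A(\hat S)\leq A(S)$, and still satisfies $|Q_{\rm E}(\hat S)|=1$ by the homological argument above; applying $A\geq 4\pi Q_{\rm E}^2$ to $\hat S$ gives $A(S)\geq A(\hat S)\geq 4\pi$.

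Alternatively, and perhaps cleaner for this self-contained exposition, I would reprove the bound directly in the spirit of Lemma~\ref{RNHG}: let $\hat S$ be the homological area-minimizer just described, use the stability inequality \eqref{SIMD} with test function $\alpha\equiv 1$, combine with Gauss--Bonnet ($\int_{\hat S}\gcur\,dA=4\pi\chi(\hat S)=4\pi$ since $\hat S$ is a sphere — note a separating surface of genus zero is forced here because $\Sigma$ retracts onto $\mathbb{S}^2$, though in general one only gets $\chi\le 2$, still enough), and the Cauchy--Schwarz estimate $1=|Q_{\rm E}(\hat S)|\leq (4\pi)^{-1/2}(\int_{\hat S}|E|^2\,dA)^{1/2}$ exactly as in \eqref{EESI}, to obtain
\[
4\pi=\int_{\hat S}\gcur\,dA\geq \int_{\hat S}\Big(|E|^2+|B|^2+\tfrac{|K|^2}{2}+\tfrac{|\Theta|^2}{2}\Big)\,dA\geq \int_{\hat S}|E|^2\,dA\geq 4\pi,
\]
which forces $A(\hat S)=4\pi$ and hence $A(S)\geq 4\pi$.

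The main obstacle is the existence and regularity of the area-minimizer in the prescribed homology class together with the barrier argument preventing loss of the minimizer into the asymptotically cylindrical end — the cylindrical end is noncompact with area-sections of fixed size, so one must use the exponential decay of the data into the ERNT$_1$ model (which is what the "falls off exponentially" hypothesis on the relevant perturbations buys) to show a genuine minimizer, rather than a minimizing sequence sliding off to infinity, actually exists. Everything after that is the same stability-inequality computation already carried out in Lemma~\ref{RNHG}.
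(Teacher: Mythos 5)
Your first route (produce a stable minimal representative of the homology class, then apply the charged area inequality $A\geq 4\pi Q_{\rm E}^2$) is essentially the paper's proof, with minor deviations: the paper minimizes over the \emph{isotopy} class using Meeks--Simon--Yau rather than over the homology class, and it secures compactness of the minimizing sequence by noting that $\underline{A}(S)>0$ follows from charge conservation, since $4\pi=4\pi|Q_{\rm E}(S'_j)|\leq \|E\|_{L^\infty}A(S'_j)$ forbids $A(S'_j)\to 0$. The paper then works between two barrier spheres (deforming the metric slightly near the cylindrical one to make it convex) to apply the existence theorem, and afterwards shows the minimizer avoids the deformed region. In particular, exponential decay plays no role here: Definition~\ref{DERNT} only demands $C^0$-convergence of the data to the ERN$_1$ model along each end, and that is all the paper uses. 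Your appeal to the ``falls off exponentially'' hypothesis conflates this proposition, which applies to arbitrary ERN$_1$-type data, with the specific perturbations built later in Section~\ref{SECDS}.

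The ``alternatively, and perhaps cleaner'' computation contains a genuine error. The inequality $1=|Q_{\rm E}(\hat S)|\leq (4\pi)^{-1/2}\big(\int_{\hat S}|E|^2\,dA\big)^{1/2}$ quoted from \eqref{EESI} is only valid under the standing hypothesis $A(S)=4\pi$ of Lemma~\ref{RNHG}, where the Cauchy--Schwarz area factor $A(S)^{1/2}$ happens to equal $(4\pi)^{1/2}$. For a general surface $\hat S$ Cauchy--Schwarz gives only
\begin{equation*}
\int_{\hat S}|E|^2\,dA\;\geq\;\frac{(4\pi|Q_{\rm E}(\hat S)|)^2}{A(\hat S)}=\frac{(4\pi)^2}{A(\hat S)},
\end{equation*}
so the step $\int_{\hat S}|E|^2\,dA\geq 4\pi$ in your chain silently presupposes $A(\hat S)\leq 4\pi$, which is not available. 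The repair is exactly the paper's computation: from stability with $\alpha\equiv 1$ and Gauss--Bonnet one has $4\pi\geq\int_{\hat S}|E|^2\,dA$, and combining this with the displayed Cauchy--Schwarz bound yields $4\pi\geq(4\pi)^2/A(\hat S)$, i.e.\ $A(\hat S)\geq 4\pi$, hence $A(S)\geq A(\hat S)\geq 4\pi$. So the conclusion is right, and your first route is sound, but the ``cleaner'' alternative as written misquotes \eqref{EESI} and therefore does not close.
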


\begin{proof}[\bf Proof.] We prove first that $|Q_{\rm E}(S)|=1$. Think $\Sigma$ as $\mathbb{R}^{3}\setminus \{o\}$ and $S$ as a surface embedded in it.  Then recall that any compact, boundary-less and orientable surface embedded in $\mathbb{R}^{3}$ divides $\mathbb{R}^{3}$ into two connected components one of which is necessarily unbounded. As $S$ is non-contractible inside $\mathbb{R}^{3}\setminus \{o\}$ then the bounded component of $\mathbb{R}^{3}\setminus S$ must contain the origin $o$. That is, $S$ separates the two ends of $\Sigma$ and the electric charge of $S$ (with an appropriate normal) must be that of the asymptotically flat end, i.e. $|Q_{\rm E}(S)|=1$. 

We prove now that $A(S)\geq 4\pi$. Assume by contradiction the existence of an $S$ with $A(S)<4\pi$. Let $\underline{A}(S)=\inf \{A(S'),S' \text{\ isotopic\ to\ } S\}$. Then obviously we have $4\pi>\underline{A}$. We claim that we also have $\underline{A}(S)>0$. In fact, if there is a sequence $S'_{j}$ of surfaces isotopic to $S$ such that $A(S'_{j})\rightarrow 0$ then 
\ben
1=|Q_{\rm E}(S'_{j})|=\frac{1}{4\pi} \bigg| \int_{S'_{j}} <E,\zeta>\, dA\bigg| \leq \frac{1}{4\pi} \|\, E\, \|_{L^{\infty}_{g}} A(S'_{j})\rightarrow 0
\een
which would show a contradiction. 

Now,  following \cite{MR678484} ({\sc Theorem 1'} \footnote{There is a caveat here. Strictly speaking {\sc Theorem 1'} applies to manifolds with convex boundary which is not the case here (instead we have an AF end and a Cylindrical end $\sim \mathbb{R}\times \mathbb{S}^{2}$). To apply {\sc Theorem 1'} one can work between two spheres, one convex and far away in the AF end and another far away on the cylindrical end where in a neighborhood of it one modifies slightly the metric to have also a convex boundary. Apply {\sc Theorem 1'} and then show that the minimizer does not intersect the deformed region. The reader can see how this type of argument works when we use as similar one in the proof of Aux-Proposition \ref{AUXP3}.}) 
there is a (non-empty) set of compact boundary-less and non-contractible (inside $\Sigma$) minimal surfaces $\{S_{1},\ldots,S_{l}\}$ embedded in $\Sigma$ and a set of positive integers $\{n_{1},\ldots,n_{l}\}$ such that
\ben
\underline{A}(S)=\sum_{i=1}^{i=l} n_{i}A(S_{i})
\een
As $\Sigma$ is diffeomorphic to $\mathbb{R}^{3}\setminus \{o\}$ then all the $S_{i}$'s must be orientable and therefore stable minimal surfaces \cite{MR678484}. Consider now $S_{1}$ and note that $A(S_{1})\leq \underline{A}(S)<4\pi$. We show now that in addition to this it must also be $A(S_{1})\geq 4\pi$, which is a contradiction. To show $A(S_{1})\geq 4\pi$ we recall (as was shown before) that $|Q_{\rm E}(S_{1})|=1$. Therefore plugging $\alpha=1$ in (\ref{SIN}) we have
\begin{align}
4\pi  \geq \int_{S_{1}} |E|^{2}\, dA\geq \frac{1}{A(S_{1})}\bigg(\int_{S_{1}} |<E,\zeta>|\, dA\bigg)^{2}
 \geq \frac{(4\pi |Q_{\rm E}(S_{1})|)^{2}}{A(S_{1})}=\frac{(4\pi)^{2}}{A(S_{1})}
\end{align}
as wished. \end{proof}

The following crucial refinement of Proposition \ref{PASO} shows that equality in the second equation of (\ref{UAI}) cannot be achieved. The proof is based in similar argument to those in \cite{ReirisZTBH}.

\begin{Proposition}\label{PAS} Let $(\Sigma;g,K;E,B)$ be a maximal electro-vacuum data set of ERN$_{1}$ type. Then every (compact, boundary-less and orientable) embedded surface $S$ which is non-contractible inside $\Sigma$ has 
\ben
\quad A(S)>4\pi.
\een 
\end{Proposition}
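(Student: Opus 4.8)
The plan is to argue by contradiction, upgrading the strict inequality out of the equality case of Lemma \ref{RNHG}. Suppose some non-contractible embedded surface $S$ has $A(S)=4\pi$ (Proposition \ref{PASO} already forbids $A(S)<4\pi$). As in the proof of Proposition \ref{PASO}, set $\underline{A}(S)=\inf\{A(S'):S'\text{ isotopic to }S\}$; we know $0<\underline{A}(S)\le 4\pi$. First I would run the Meeks--Simon--Yau minimization (Theorem 1' of \cite{MR678484}, with the boundary caveat handled as in the footnote to Proposition \ref{PASO}) to write $\underline{A}(S)=\sum_{i=1}^{l} n_i A(S_i)$ for stable, orientable, non-contractible minimal surfaces $S_i$. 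Each $S_i$ separates the two ends, so $|Q_{\rm E}(S_i)|=1$, and the displayed chain of inequalities in the proof of Proposition \ref{PASO} forces $A(S_i)\ge 4\pi$ for every $i$. Combined with $\sum n_i A(S_i)=\underline{A}(S)\le 4\pi$ this leaves exactly one possibility: $l=1$, $n_1=1$, and $A(S_1)=4\pi$. So there is a stable minimal surface $S_1$ with $A(S_1)=4\pi$ and $|Q_{\rm E}(S_1)|=1$.

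Now Lemma \ref{RNHG} applies verbatim to $S_1$: it must be a normalized extreme RN sphere, i.e. over $S_1$ we have $\gcur=1$, $E=\zeta$, $B=0$, $\Theta=0$, $K=0$. The strategy from here is to show such a totally geodesic, ``extreme RN'' sphere cannot exist inside a data set of ERN$_1$ type unless the data is literally the ERNT$_1$ throat, and to derive a contradiction with the asymptotic structure. The key observation is that the stability inequality was saturated at $\alpha\equiv 1$ along $S_1$, so $S_1$ is a \emph{stable minimal sphere sitting at the bottom of a would-be cylindrical neck}: by the standard second-variation/splitting argument (the same circle of ideas as in \cite{ReirisZTBH}), the equality case propagates, forcing the metric, $E$, and $K$ to agree with the ERNT$_1$ data in a collar neighborhood of $S_1$ — in particular the constraint equations, together with $\Theta\equiv 0$ and $E=\zeta$ on $S_1$ and the electric Gauss law $\nabla^i E_i=0$, pin down the normal derivatives of all the fields. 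Essentially I would reproduce (in this electro-vacuum setting) the rigidity argument that Section \ref{RIGS} of the paper carries out for the ERNT$_1$ data, localized to a neighborhood of $S_1$.

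To close the contradiction I would use the global topology: $\Sigma\cong\mathbb{R}\times\mathbb{S}^2$ with one AF end and one cylindrical end, and $S_1$ separates them. On the AF side, push $S_1$ outward; the collar rigidity together with a continuation/maximum-principle argument (or a monotonicity-of-area argument, since the minimal cylinder has constant area $4\pi$ and any isotopic surface further out has area $\ge 4\pi$ with equality only in the rigid case) forces the ERNT$_1$ geometry to persist all the way out — but the AF end has area growing like $4\pi r^2\to\infty$ and is not cylindrical, a contradiction. Alternatively, and perhaps more cleanly, one notes that the AF end contains large coordinate spheres which are \emph{strictly mean-convex toward infinity}, so minimizing area in the isotopy class of $S$ between $S_1$-side and a large AF sphere cannot produce a second minimal component escaping to infinity; combined with $A(S_1)=4\pi$ this contradicts the strict inequality $A>4\pi$ one expects from the existence of a genuine AF end carrying positive mass. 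The main obstacle is this last step: making the propagation of the equality case rigorous and, above all, deriving the contradiction with the ERN$_1$-type asymptotics without assuming more regularity of the convergence than the $C^0$ in Definition \ref{DERNT}; I expect the paper handles it by exploiting that the \emph{charge} (not the mass) is rigidly $|Q_{\rm E}|=1$ and that a totally geodesic extreme RN sphere cannot coexist with a nondegenerate AF end of charge $1$ unless the whole data is the throat — which, being noncompact in a way incompatible with having an AF end, is impossible.
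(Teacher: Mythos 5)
Your reduction to a single stable minimal sphere $S_1$ with $A(S_1)=4\pi$ and $|Q_{\rm E}(S_1)|=1$, and the application of Lemma \ref{RNHG} to conclude $S_1$ is a normalized extreme RN sphere, are correct and match the opening of the paper's proof (the paper takes a small shortcut: since $A(S)\ge 4\pi$ on the whole isotopy class of $S_0$ and $A(S_0)=4\pi$, $S_0$ itself is already minimal and stable, so the Meeks--Simon--Yau decomposition is not needed at this stage). From that point on, however, your plan diverges from the paper and the crucial step is not actually carried out.

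The gap is in the sentence ``the equality case propagates, forcing the metric, $E$, and $K$ to agree with the ERNT$_1$ data in a collar neighborhood of $S_1$.'' Saturating the stability inequality at the single test function $\alpha\equiv 1$ on a single surface $S_1$ tells you nothing about nearby surfaces; it does not produce a foliation of $4\pi$-area extreme RN spheres, and without such a foliation there is no collar to compare with the throat metric. A Bray--Brendle--Neves-type ``infinitesimal rigidity $\Rightarrow$ local splitting'' theorem is exactly what would be needed here, but it must be proved, not invoked — and proving it in the electro-vacuum setting is essentially the content of the paper's Section \ref{RIGS}. Note that Lemma \ref{PP} (the ERNT$_1$ rigidity you want to lean on) is itself proved in the paper via Aux-Propositions \ref{AUXP2}--\ref{AUXP3}, and Aux-Proposition \ref{AUXP3} explicitly says ``This will be done exactly as in Proposition \ref{PAS}''; so appealing to that rigidity to prove Proposition \ref{PAS} either goes in a circle or requires you to independently develop the very machinery you are trying to avoid. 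Your ``alternatively, and more cleanly'' paragraph does not rescue this: it appeals to ``the strict inequality $A>4\pi$ one expects from the existence of a genuine AF end carrying positive mass,'' which is precisely the statement being proved, and the ERN$_1$-type hypothesis is only $C^0$ at the ends so there is no a priori link between the ADM mass of the AF end and the area of interior minimal surfaces. You flag the problem yourself (``the main obstacle is this last step''), and indeed it is where the proof is missing.

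The paper's actual closing argument is dynamical, not static. It chooses a positive lapse $N_0$ on a region $\Omega_0$ between the cylindrical end and a large convex AF sphere, with $\Delta N_0 - |E|^2 N_0 = 0$, $N_0\to 0$ along the cylindrical end, and $N_0$ not constant on $S_0$. Flowing along the geodesic congruence with this lapse yields a spacetime foliation $\Omega_t$ with two crucial features: $\ddot A_{g_t}(S_0)\big|_{t=0}=-A''_{N_0}(S_0)<0$ (so $A_{g_t}(S_0)=4\pi-c_2t^2/2+O(t^3)$ drops strictly below $4\pi$), while $k_t=O(t^2)$ because $\partial_t k_t|_{t=0}=0$ by the maximal lapse equation. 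One then minimizes area in the evolved slice $(\Omega_0,g_t)$ to obtain a stable minimal sphere $\hat S_t$ of charge one and area $\le A_{g_t}(S_0)$, applies the stability inequality with $\alpha=1$, and obtains $4\pi\ge (4\pi)^2/A(\hat S_t)-O(t^4)$, which combined with $A(\hat S_t)\le 4\pi-c_2t^2/2$ forces a $t^2$-term to be dominated by a $t^4$-term — impossible for small $t$. This order-of-magnitude contradiction in $t$ replaces entirely the local-splitting argument you sketch, and is the genuinely new idea in the proof.
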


\begin{proof}[\bf Proof.] By Proposition \ref{PASO} it is enough to show that equality in (\ref{UAI}) cannot be achieved. Proceeding by contradiction assume then that there is $S_{0}$ with $A(S_{0})=4\pi$. Then observe that if $S$ is isotopic to $S_{0}$ then $S$ is also non contractible inside $\Sigma$. Therefore, again by Proposition \ref{PASO}, we have $A(S)\geq 4\pi$ for any surface $S$ isotopic to $S_{0}$. This implies that $S_{0}$ is minimal and stable \footnote{\label{FOTN1} More explicitly, for any smooth $F:[-\varepsilon,\varepsilon]\times S_{0}\rightarrow \Sigma$ with $F(0,-)={\rm Id} (-)$ and $\varepsilon$ small to have $F(x,-):S_{0}\rightarrow \Sigma$ a smooth embedding, the real function $\lambda\rightarrow A(F(\lambda,S_{0}))$, (which is greater or equal than $4\pi$ for all $\lambda$), must have an absolute minimum at $\lambda=0$. It follows that the first $\lambda$-derivative is zero and the second is non-negative. As this is valid for all $F$ then the surface is minimal and stable.}. By Lemma \ref{RNHG} $S_{0}$ is an extreme RN sphere.  

Let ${\mathscr S}_{0}$ be a large and strictly convex sphere (w.r.t the outer normal) over the asymptotically flat end. Denote by $\Omega_{0}$ the region enclosed by it and the cylindrical end and assume that $S_{0}\subset {\rm Int}(\Omega_{0})$. In what follows we are going to use this region $\Omega_{0}$ together with a positive solution $N=N_{0}$ of 
\be\label{LAP}
\Delta N - |E|^{2}N=0
\ee
over $\Omega_{0}$, asymptotically vanishing over the cylindrical end and not-identical to a constant over $S_{0}$. The existence of such $N_{0}$ is proved as follows. Take any two linearly independent smooth positive functions $f_{1}$ and $f_{2}$ over ${\mathscr S}_{0}$. For $i=1,2$, let $\tilde{N}_{i}$ be the solution to (\ref{LAP}) on $\Omega_{0}$ with the boundary condition $\tilde{N}_{i}|_{{\mathscr S}_{0}}=f_{i}$ and asymptotically vanishing over the cylindrical end of $\Omega_{0}$. By the maximum principle we have $\tilde{N}_{i}>0$ for $i=1,2$. If both solutions are constant over $S_{0}$ then one can take a linear combination $\tilde{N}:=\alpha_{1}\tilde{N}_{1}+\alpha_{2}\tilde{N}_{2}$ vanishing exactly over $S_{0}$ but with $\alpha_{1}\neq 0$ and $\alpha_{2}\neq 0$. As $\tilde{N}$ asymptotically vanishes over the cylindrical end of $\Omega_{0}$ and is zero over $S_{0}$ then, by the uniqueness of solutions to (\ref{LAP}), the combination has to be zero all over the set enclosed by $S_{0}$ and the cylindrical end. Then, the unique continuation principle \cite{MR0092067} tells that $\tilde{N}$ has to be zero all over $\Omega_{0}$ which is not possible because $f_{1}$ and $f_{2}$ were chosen to be linearly independent.  

The reason why we take such $N_{0}$ is twofold and will be explained adequately during the argumentation below.

In the space-time generated by the initial data consider the future-pointing congruence $\{\gamma(p,\tau)\}$ of time-like geodesics $\gamma(p,\tau)$ starting perpendicularly to $\Omega_{0}$ at $p\in \Omega_{0}$ and parametrized by proper time $\tau$. We are going to move $\Omega_{0}$ with the help of this congruence and obtain a foliation $\{\Omega_{t}\}$ \footnote{Of course is a foliation of a piece of the space-time.}. The leaves $\Omega_{t}$ of the foliation are defined, for every given $t$, as the image of the map
\ben
F_{t}:\, p\in \Omega_{0}\rightarrow \gamma(p,N_{0}(p)t)\in \Omega_{t}
\een  
This map in turn induces Lapse and Shifts, $N_{t},\, X_{t}$ over each $\Omega_{t}$ with the property that $N_{t=0}=N_{0}$ and $X_{0}=0$. Of course the result of moving a point $p\in\Omega_{0}$ through the space-time vector field $N_{\tau}{\bf n}_{\tau}+X_{\tau}$ and for a lapse of time $t$ is the same as $F_{t}(p)$. The leaves $\Omega_{t}$ are naturally identified to $\Omega_{0}$ and thus the space-time metric together with the electromagnetic tensor are described by a flow $(g_{t},K_{t};N_{t},X_{t};E_{t},B_{t})$ over $\Omega_{0}$ (c.f. Section \ref{BACKM} item {\bf (i)}; note also that we are changing notation from $(g(t),K(t);N(t),X(t);E(t),B(t))$ to $(g_{t},K_{t};N_{t},X_{t};E_{t},B_{t})$ which makes the writing clearer in this part). 

To simplify notation below, when we omit the subindex $t$ we mean $t=0$. 

We can comment now on one of the reasons why we chose $N_{0}$ satisfying (\ref{LAP}). In general, the time derivative of the mean curvature $k_{t}$ of the leaves of a space-like foliation $\{\Omega_{t}\}$ with Lapse $N_{t}$ and Shift $X_{t}$ is given by 
\ben
\partial_{t} k_{t}=-\Delta_{g_{t}} N_{t} + \big(4\pi ({\bf T}_{00}+{\bf T}_{ij}g_{t}^{ij}) + |K_{t}|^{2}\big)N_{t}
\een
In our case we have, at time $t$ equal zero, $\big(4\pi ({\bf T}_{00}+{\bf T}_{ij}g^{ij}) + |K|^{2}\big)=|E|^{2}$ (use ${\bf T}_{00}={\bf T}_{ij}g^{ij}$ and $8\pi {\bf T}_{00}=|E|^{2}+|B|^{2}$). Hence, $\partial_{t} k_{t}|_{t=0}=0$. As we also have $k_{t}|_{t=0}=0$ we obtain $k_{t}=(\partial_{t}^{2} k_{t}|_{t=0}) t^{2}/2 +O(t^{3})$ in short times. Having this quadratic behavior of $k_{t}$ in short times was one of the reasons behind the choice of $N_{0}$ and will be crucial later.

Define $S_{t}=F_{t}(S_{0})\subset \Omega_{t}$, the translation of $S_{0}$ by $F_{t}$. Recall tat we are identifying $\Omega_{t}$ to $\Omega_{0}$ through $F_{t}$. In this identification the surface $S_{t}$ is identified to $S_{0}$. In this sense the area of $S_{t}$ is the same as $A_{g_{t}}(S_{0})$, a notation that we keep using below. 

We claim that 
\be\label{SVTI}
\ddot{A}_{g_{t}}(S_{0})\bigg|_{t=0}=-A''_{N_{0}}(S_{0})
\ee
where the double dot means twice the $t$-derivative of $A_{g_{t}}(S_{0})$ and $A''_{N_{0}}(S_{0})$ is, following the notation introduced before, the second variation of area of $S_{0}$ along $N_{0}\zeta$. We prove this claim in what follows. As was calculated in Proposition 3 in \cite{ReirisZTBH} we have
\begin{align}\label{GRR}
\ddot{A}_{g_{t}}(S_{0})\bigg|_{t=0}=& \int_{S_{0}} \big[N_{0}\nabla_{A}\nabla_{B} N_{0} - N_{0}^{2}\big(Ric_{AB} -2K_{Ai}K^{i}_{\ B}\big)\big]\, h^{AB}\, dA\\ 
\nonumber & + \int_{S_{0}} 8\pi N_{0}^{2}\big[{\bf T}_{AB}-\frac{1}{2}( {\bf T}_{ij}g^{ij} - {\bf T}_{00}\big) g_{AB}\big]\, h^{AB}dA
\end{align}
where we included here the term involving ${\bf T}$ that was omitted in \cite{ReirisZTBH} as in there only vacuum solutions were considered \footnote{More precisely, in the second formula of Proposition 3 use $\dot{K}_{ij}=-\nabla_{i}\nabla_{j} N +N(Ric_{ij}-2K_{il}K^{l}_{\ j}) - 8\pi N({\bf T}_{ij}+\frac{1}{2}({\bf T}_{lm}g^{lm}-{\bf T}_{00})g_{ij})$ instead of just $\dot{K}_{ij}=-\nabla_{i}\nabla_{j} N +N(Ric_{ij}-2K_{il}K^{l}_{\ j})$ (recall that the data at the initial time is maximal, that is $k=0$). }. 
In the previous formula $Ric$ is the Ricci curvature of $g=g_{0}$ and $\nabla$ its covariant derivative. We note then that: 
\begin{enumerate}
\item The electromagnetic stress-energy is traceless and therefore ${\bf T}_{ij}g^{ij}-{\bf T}_{00}=0$,
\item $Ric_{AB}h^{AB}=R-Ric(\zeta,\zeta)=2|E|^{2}-Ric(\zeta,\zeta)$,
\item And finally, because $S_{0}$ has the geometry of an extreme RN-horizon the conditions (\ref{ERNE}) hold and we have 
\begin{align*}
& 8\pi {\bf T}_{AB}h^{AB}=2|E|^{2},\quad  K_{Ai}K^{i}_{\ B}h^{AB}=0,\quad \text{and,}\\
& \int_{S_{0}} N_{0}(\nabla_{A}\nabla_{B} N_{0})\, h^{AB}\, dA=-\int_{S_{0}} |\nabla N_{0}|^{2}\, dA
\end{align*}
where in the last formula the gradient of $N_{0}$ is taken over $S_{0}$.
\end{enumerate}
Combining this information in (\ref{GRR}) and after a crucial cancelation of the terms involving $|E|^{2}$ we obtain 
\ben
\ddot{A}_{g_{t}}(S_{0})\bigg|_{t=0}=-\int_{S_{0}} \bigg( |\nabla N_{0}|^{2} - Ric(\zeta,\zeta) N_{0}^{2}\bigg)\, dA=-A''_{N_{0}}(S_{0})
\een
where to deduce the second equality we have used (\ref{SIM}) and that $\Theta=0$ over $S_{0}$.
We can comment now on the second reason for our particular selection of $N_{0}$. If $N_{0}$ is not exactly the constant function one over $S_{0}$, as we are assuming, then $A''_{N_{0}}(S_{0})>0$ and therefore $\ddot{A}_{g_{t}}(S_{0})|_{t=0}<0$. This is our second reason and will be also crucial below.

The space-time vector field ${\bf V}$ which moves $\Omega_{0}$ to $\Omega_{t}$ and which generates the flow $g_{t}$, is,  at a space-time point $q=\gamma(p,N_{0}(p)t)$, given by 
\ben
{\bf V}(q)=\frac{d F(p,N_{0}(p)t)}{d t}=N_{0}(p) \frac{d\gamma(p,\tau)}{d \tau}\bigg|_{\tau=N_{0}(p)t}=N_{0}(p)\gamma'(q)
\een   
Recalling that $N_{0}$ tends to zero (indeed exponentially) over the asymptotically cylindrical end of $(\Omega_{0},g_{0})$ we conclude
that ${\bf V}$ tends to zero over the asymptotically cylindrical end and for this reason the evolution of $g_{t}$ over the end freezes up. Thus the metrics $g_{t}$ inherit exactly the same cylindrical asymptotic for every $t$, that is, that of the metric product of the unit two-sphere and the half-real line. 

Take (by continuity) $t^{*}>0$ small enough such that for all $t\in [0,t^{*}]$, the boundary of $(\Omega_{0},g_{t})$ is still strictly convex. Assume that $t^{*}$ was chosen small enough that $A_{g_{t}}(S_{0})<4\pi$ for every $t\in [0,t^{*}]$. Then, again based on general results on minimal surfaces \cite{MR678484} we can guarantee, for every $t\in [0,t^{*}]$, the existence of a stable minimal sphere 
\footnote{That the limit is connected and is a sphere follows from the genus bounds (1.4) of {\sc Theorem 1} in \cite{MR678484}.}
$\hat{S}_{t}$ in $\Omega_{0}$ of area less or equal than $A_{g_{t}}(S_{0})$, non contractible inside $\Omega_{0}$ and thus of electric charge one. 

We proceed now to gather conveniently all the information obtained so far and use it thereafter to reach a contradiction. 
\begin{enumerate}
\item From $k_{t}=(\partial^{2}_{t} k_{t}|_{t=0})t^{2}/2 +O(t^{3})$ we have, for all $t\in [0,t^{*}]$ (chose $t^{*}$ smaller if necessary), 
\be\label{UNO}
k^{2}_{t}\leq 2 c_{1}^{2} t^{4}\qquad \text{where}\qquad c_{1}=\sup \bigg\{\frac{\big| \partial^{2}_{t} k_{t}(p)\big|_{t=0}}{2}, p\in \Omega_{0}\bigg\} ,
\ee 
\item From $A_{g_{t}}(S_{0})=4\pi - A''_{N_{0}}(S_{0})t^{2}/2 + O(t^{3})$ we have, for all $t\in [0,t^{*}]$ (chose $t^{*}$ smaller if necessary), 
\be\label{DOS}
A_{g_{t}}(S_{0})\leq 4\pi - \frac{c_{2}}{2}t^{2}\leq 4\pi \qquad \text{where\ }\qquad c_{2}=\frac{A''_{N_{0}}(S_{0})}{2}>0
\ee
\item For every $t\in [0,t^{*}]$ there is a stable minimal sphere $\hat{S}_{t}$ with $Q_{\rm E}(\hat{S}_{t})=1$ and $A_{g_{t}}(\hat{S}_{t})\leq A_{g_{t}}(S_{0})$.

\end{enumerate}
Now, the stability inequality at $\hat{S}_{t}$ with trial function $\alpha=1$ gives
\ben
4\pi \geq \int_{\hat{S}_{t}} |E_{t}|^{2}\, dA_{t} - \int_{\hat{S}_{t}} \frac{k_{t}^{2}}{2}\, dA_{t}
\een  
Use then (\ref{UNO}) and that $\int_{\hat{S}_{t}}|E_{t}|^{2}\, dA_{t}\geq (4\pi)^{2}/A(\hat{S}_{t})$ (because $Q_{\rm E}(\hat{S}_{t})=1$) to transform this equation into
\ben
4\pi\geq \frac{(4\pi)^{2}}{A(\hat{S}_{t})} - c_{1}^{2}t^{4} A(\hat{S}_{t})
\een
Multiply this equation by $A(\hat{S}_{t})/4\pi$ and then use that $A(\hat{S}_{t})\leq A(S_{0})$ and (\ref{DOS}) to deduce 
$4\pi - c_{2}t^{2}/2\geq 4\pi -4\pi c^{1}_{2}t^{4}$ or, the same, $8\pi c^{2}_{2}t^{4}\geq c_{1}t^{2}$, which is impossible for small $t$. 
\end{proof}

\section{A family of perturbations of the ERN$_{1}$ initial data}\label{SECDS}

Recall that the metric of the ERN$_{1}$ space-time is 
\begin{align}\label{RNM}
{\bf g}=-\big(1-1/r\big)^{2}dt^{2}+\frac{1}{\big(1-1/r\big)^{2}}dr^{2}+r^{2}d\Omega^{2}
\end{align}
and that on the hypersurface $\Sigma_{0}=\{t=0\}$ we have $K_{0}=0,\ B_{0}=0$ and that the electric field is radial and takes the form $E_{0}=\zeta/r^{2}$ where $\zeta=\partial_{r}/|\partial_{r}|$ is the unit normal to the radial spheres $S_{\bar{r}}=\{r=\bar{r}\}$. 
Now, the constraint equations (\ref{CEEM}) of an electro-vacuum data set $(g,K;E,B)$ with $K=0$ and $B=0$ reduce to
\be\label{CEV}
\left\{
\begin{array}{l}
R=2|\, E\, |^{2},\vspace{.1cm} \\
{\rm div}\, E=0
\end{array}
\right.
\ee
Because of this the scalar curvature $R_{0}$ of the metric $g_{0}$ of the ERN$_{1}$ standard initial data is $R_{0}=2/r^{4}$. 

In the argumentation given below we will make use of an expression for the three-Laplacian $\Delta_{g_{0}}$ acting on radial functions $\phi=\phi(r)$ of $\Sigma_{0}$. A direct calculation using the general formula $\Delta \phi=\partial_{r}(\sqrt{g}g^{rr}\partial_{r} \phi)/\sqrt{g}$ gives, when $\phi=\phi(r)$, the expression
\ben
\Delta_{g_{0}}\, \phi=\frac{r(r-1)}{r^{4}}\frac{d}{dr}\bigg[r(r-1)\frac{d}{dr}\phi\bigg]
\een
This formula is simplified if we use the harmonic radial coordinate $x=\ln(1-1/r)$ instead of $r$ (harmonic means $\Delta_{g_{0}} x=0$). With this definition the range of $x$ is $(-\infty,0)$. In this new coordinate the Laplacian acting on radial functions reads 
\be\label{LSF}
\Delta_{g_{0}}\, \phi=\frac{\phi''}{r^{4}}
\ee
where here $\phi''=d^{2}\phi/dx^{2}$. Note then that $\Delta_{g_{0}}\, \phi=|E_{0}|^{2}\,\phi''$.

We proceed now to construct the bi-parametric family of axisymmetric ``perturbations'' of the initial data on $\Sigma_{0}$. The axisymmetric Killing field will be $\partial_{\varphi}$, which, note, is also axisymmetric Killing for the background data set. 
The two parameters of the family will be $\hat{\epsilon}$ and $\hat{x}$. Roughly speaking the variable $\hat{\epsilon}$ represents the ``strength'' of the perturbation while $\hat{x}$ marks the sphere around which the perturbation ``concentrates''. This interpretation will be clear as the construction progresses. To explain the construction let us recall in what follows the {\it conformal method} to solve the constraint equations but for the situation that is of interest here, namely when the data set to be found is time symmetric and has no magnetic field. Let $(\Sigma,g)$ be a Riemannian manifold of scalar curvature $R$. On it let $\hat{E}$ be a $g$-divergence-less vector field. If for $\phi>0$ we have
\be\label{LEQ}
\Delta\, \phi = R\,\phi - 2|\,\hat{E}\,|^{2}\, \phi^{-3},
\ee
then $\bar{g}=\phi^{4} g$ and $\bar{E}=\phi^{-6}\hat{E}$ satisfy the constrain equations (\ref{CEV}). We will use this method below
with $(\Sigma,g)=(\Sigma_{0},g_{0})$ and $\hat{E}=E_{\hat{\epsilon},\hat{x}}$ suitably chosen.

In what follows we will identify $\Sigma_{0}$ to $(-\infty,0]\times \mathbb{S}^{2}$ where the factor $(-\infty,0]$ is the range of the coordinated $x$ introduced before. 
From now on the parameter $\hat{x}$ is set to vary in $(-\infty,-2]$ and $\hat{\epsilon}$ in $(0,1/16)$. Fix a smooth and non-zero axisymmetric two-form $\omega$ supported on $(-3,-1)\times {\mathbb S}^{2}\subset (-\infty,-1)\times {\mathbb S}^{2}$. This form is set to be fixed from now on and will not be adjusted anymore. 
For every $\hat{x}$ let $\chi_{\hat{x}}^{*}\, \omega$ be the pull-back of $\omega$ to $[\hat{x}-1,\hat{x}+1]\times {\mathbb S}^{2}$ under the transformation $\chi_{\hat{x}}:[\hat{x}-1,\hat{x}+1]\times S^{2}\rightarrow [-3,-1]\times S^{2}$ given by $(x,\theta,\varphi)\rightarrow (x-\hat{x}-2,\theta,\varphi)$.
Then, for every $\hat{x}$ and $\hat{\epsilon}$ define
\be\label{EHAT}
\hat{E}_{\hat{x},\hat{\epsilon}}=E_{0}+\hat{\lambda}\, \big(\hspace{-.1cm}\star d \star (\chi^{*}_{\hat{x}}\, \omega)\big)^{\sharp}
\ee
where $\hat{\lambda}=\hat{\lambda}_{\hat{x},\hat{\epsilon}}$ is a factor chosen to have $\hat{\epsilon}=\sup\, \big|1-|\hat{E}_{\hat{x},\hat{\epsilon}}|^{2}/|E_{0}|^{2}\big|$ (here $|\ldots|=|\ldots|_{g_{0}}$), the star $\star$ in $\star d\star $ is the $g_{0}$-Hodge star and $(\star d \star (\chi^{*}_{\hat{x}}\, \omega))^{\sharp}$ is the $g_{0}$-dual vector field of the form $\star d \star (\chi^{*}_{\hat{x}}\, \omega)$. In this way $\hat{E}_{\hat{x},\hat{\epsilon}}$ comprises a bi parametric family of divergence-less axisymmetric vector fields which are equal to the background field $E_{0}$ outside $[\hat{x}-1,\hat{x}+1]\times {\mathbb S}^{2}$ but otherwise not very different from it. 

In what follows and to simplify notation we keep using $|\ldots|=|\ldots|_{g_{0}}$ and make also $\hat{E}=\hat{E}_{\hat{x},\hat{\epsilon}}$.

We pass now to show that for every $\hat{E}$ we can find an axisymmetric solution to the the Lichnerowitz equation (\ref{LEQ}) (L-equation from now on) with good geometric properties. To this extent we use the method of sub and super-solutions. Namely, if for axisymmetric functions (barriers) $\phi_{+}>0$ and $\phi_{-}>0$ with $\phi_{+}>\phi_{-}$ we have 
\begin{align}\label{SSSOL}
\left\{
\begin{array}{l}
\Delta_{g_{0}} \phi_{+}\leq 2|E_{0}|^{2}\phi_{+}-2|\hat{E}|^{2}\phi_{+}^{-3},\vspace{.3cm}\\ 
\Delta_{g_{0}} \phi_{-}\geq 2|E_{0}|^{2}\phi_{-}-2|\hat{E}|^{2}\phi_{-}^{-3}
\end{array}
\right.
\end{align}
(recall $R_{0}=2|E_{0}|^{2}$) then there is an axisymmetric solution $\phi>0$ to (\ref{LEQ}) with $\phi_{-}\leq \phi\leq \phi_{+}$, (for a proof of this fact in this context see \cite{Chrusciel-Mazzeo}\footnote{To get an axisymmetric solution out of the method of barriers just work inside the family of axisymmetric functions all the time in \cite{Chrusciel-Mazzeo}.}). We explain now how to find $\phi_{-}$ which will be a radial function, i.e. $\phi_{-}=\phi_{-}(x)$. In (I) below we define $\phi_{-}(x)$ over $(-\infty,-1]$ and in (II) over $[-1,0)$. 
The global function defined by (I) and (II) will be smooth over the separate domains $(-\infty,-1)$ and $(-1,0)$ but will be just $C^{0}$ at $x=-1$. For this reason to check that such global function is a barrier in the distributional sense \cite{Chrusciel-Mazzeo} it will be necessary to check that its left derivative at $x=-1$ is less than its right derivative \footnote{Alternatively, a smooth barrier can be easily found by rounding off the global function constructed by (I) and (II).}. This will be done after (I) and (II) below.
\begin{figure}[h]
\centering
\includegraphics[width=10cm,height=2.5cm]{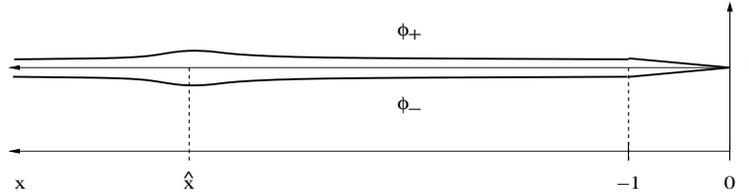}
\caption{Picture of the barriers $\phi_{-}$ and $\phi_{+}$.}
\label{Figure3}
\end{figure}
\begin{enumerate}
\item[(I)] {\it Defining $\phi_{-}(x)$ on $(-\infty,-1]$}. Make $\psi_{-}=\phi_{-}-1$ and recall that $\Delta \psi_{-}=|E_{0}|^{2} \psi_{-}''$. 
With this information and after a simple manipulation the second equation in (\ref{SSSOL}) can be displayed in the form     
\be\label{L2}
\psi_{-}''\geq 2\, \big[1+\phi_{-}^{-1}+\phi_{-}^{-2}+\phi_{-}^{-3}\big]\, \psi_{-} + 2\, \bigg[\, 1-\frac{|\hat{E}|^{2}}{|E_{0}|^{2}}\, \bigg]\, \phi_{-}^{-3}
\ee
Now, it can be easily checked that for any real number $\gamma$ such that $|\gamma-1|\leq 1/8$ we have
\be\label{EI}
3\leq 1+\gamma^{-1}+\gamma^{-2}+\gamma^{-3}\leq 5,\quad \text{and}\quad \frac{1}{2}\leq \gamma^{-3}\leq 2
\ee
Thus, if we can find $\psi_{-}(x)$ with $-1/8<\psi_{-}<0$ and satisfying 
\be\label{LB1}
\psi_{-}''\geq 6\psi_{-}+4\hat{\epsilon} \hat{I}
\ee
where $\hat{I}=\hat{I}(x)$ is the {\it indicator function} on $[\hat{x}-1,\hat{x}+1]$, (i.e. equal to one on $[\hat{x}-1,\hat{x}+1]$ and zero otherwise), then $\phi_{-}=1+\psi_{-}$ will verify (\ref{L2}) because, in this case, we would have
\ben
6\psi_{-}+4\hat{\epsilon} \hat{I}\geq 2\, \big[ 1+\phi_{-}^{-1}+\phi_{-}^{-2}+\phi_{-}^{-3}\big]\, \psi_{-} + 2\, \bigg[\, 1-\frac{|\hat{E}|^{2}}{|E_{0}|^{2}}\, \bigg]\, \phi_{-}^{-3}
\een
due to (\ref{EI}) (with $\gamma=\phi_{-}$) and because, by construction, we have $1-|\hat{E}|^{2}/|E_{0}|^{2}\leq \hat{\epsilon}$ point-wise. The function
\ben
\psi_{-}(x)=\frac{-4\hat{\epsilon}}{\cosh\, (x-\hat{x})}
\een
verifies $-1/8<\psi_{-}(x)<0$ because $\hat{\epsilon}< 1/16$. To see that it also satisfies (\ref{LB1}) on $(-\infty,-1]$ we argue as follows. First we compute $\psi''_{-}=4\hat{\epsilon} (1-2\sinh^{2} (x-\hat{x})/\cosh^{2} (x-\hat{x}))/\cosh (x-\hat{x})$ and, after plugging this inside (\ref{LB1}) and after a simple manipulation we conclude that to verify (\ref{LB1}) it is enough to verify the inequality $7-2\sinh^{2}(x-\hat{x})/\cosh^{2}(x-\hat{x})\geq \big(\cosh (x-\hat{x})\big)\, \hat{I}(x)$ for all $x\in (-\infty,-1]$. This is easily seen because the l.h.s of this expression is greater than five and the r.h.s is less or equal than $\cosh 1$ which is less than $e$. Summarizing, $\phi_{-}=\psi_{-}+1$ is a sub-solution on this range of $x$. Note that as $\psi_{-}<0$ then it is $\phi_{-}<1$.  

\vs
\item[(II)] {\it Defining $\phi_{-}(x)$ on $[-1,0)$}. On $[-1,0)$ define $\phi_{-}(x)$ by $\phi_{-}(x)=1+\psi_{-}(x)$ where
\ben
\psi_{-}(x)=\frac{4\hat{\epsilon} x}{\cosh\, (-1-\hat{x})}
\een
To see that $\phi_{-}$ is a sub-solution it is necessary to check (\ref{L2}). Firstly, as $\phi_{-}(x)$ is linear in $x$ the l.h.s of (\ref{L2}) is zero. Secondly, the second term on the r.h.s of (\ref{L2}) is zero because when $x\in [-1,0)$ it is $|E_{0}|^{2}=|\hat{E}|^{2}$. The inequality (\ref{L2}) then follows because $\psi_{-}<0$ and so is the first term on the r.h.s of (\ref{L2}).  
\end{enumerate}

So far we have defined $\phi_{-}$ and proved that it is a sub-solution when restricted to the intervals $(-\infty,-1)$ and $(-1,0)$. It remains to prove that it is also a sub-solution in the neighborhood of $x=-1$. As said, to see this it is enough to check that the left-sided derivative of $\phi_{-}$ at $x=-1$ is less than its right-sided derivative.  The left-sided derivative at $x=-1$ is $4\hat{\epsilon}\sinh (-1-\hat{x})/\cosh^{2}(-1-\hat{x})$ while the right-sided is $4\hat{\epsilon}/\cosh(-1-\hat{x})$ and the desired inequality follows.

Summarizing, the sub-solution is
\be\label{LB}
\phi_{-}(x)=\left\{
\begin{array}{lll}
1-{\displaystyle \frac{4\hat{\epsilon}}{\cosh (x-\hat{x})}} & \text{ if } & x\in (-\infty,-1],\vspace{0.2cm} \\
1+{\displaystyle \frac{4\hat{\epsilon} x}{\cosh (-1-\hat{x})}} & \text{ if } & x\in [-1,0)
\end{array}
\right.
\ee
A graph of $\phi_{-}$ is presented in Figure \ref{Figure3}. Reproducing the argument that lead to $\phi_{-}$, it is found that $\phi_{+}(x)$, defined by 
\be\label{UB}
\phi_{+}(x)=\left\{
\begin{array}{lll}
1+{\displaystyle \frac{4\hat{\epsilon}}{\cosh (x-\hat{x})}} & \text{ if } & x\in (-\infty,-1],\vspace{0.2cm} \\
1-{\displaystyle \frac{4\hat{\epsilon} x}{\cosh (-1-\hat{x})}} & \text{ if } & x\in [-1,0)
\end{array}
\right.
\ee
is a super-solution. We conclude that there is $\phi>0$, solution of (\ref{LEQ}), and satisfying $\phi_{-}\leq \phi \leq \phi_{+}$. The metric $\bar{g}=\phi^{2}g_{0}$ and the electric field $\bar{E}=\hat{E}\phi^{-6}$ satisfy the constraint equations (\ref{CEV}). 

Summarizing, from the explicit form of the sub and super-solutions we observe that $\phi-1$ ``concentrates`` around $\hat{x}$ and decays exponentially to zero in both directions of $x$ starting from $\hat{x}$. In the direction of increasing $x$ the exponential decay however stops at $x=-1$ and after that it is linear in $x$, namely of the order $1/r$ in the $r$-coordinate. Observe, to be recalled later, that the exponential 
decay of $\phi$ in the asymptotically cylindrical end implies by standard elliptic estimates that the perturbed data sets $(\bar{g},\bar{K};\bar{E},\bar{B})$ decay exponentially as defined in the introduction.

\section{Rigidity of the ERNT$_{1}$ initial data}\label{RIGS} 

The next lemma shows the rigidity of the ERNT space-time and has interest in itself. It will be used in the proof of Proposition \ref{P2}.

\begin{Lemma}\label{PP} Let $(\Sigma;g,K;E,B)$ be a smooth complete and maximal electro-vacuum data set where $\Sigma$ is diffeomorphic to $\mathbb{R}\times {\mathbb S}^{2}$. Let $S_{0}:=0\times {\mathbb S}^{2}$ and suppose that $|Q_{\rm E}(S_{0})|=1$. Suppose too that for any (compact, boundaryless and embedded) surface $S$ non-contractible inside $\Sigma$ we have $A(S)\geq 4\pi$, and that there is at least one such $S$ with $A(S)=4\pi$. Then the data set is the standard ERNT$_{1}$ initial data.
\end{Lemma}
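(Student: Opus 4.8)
The plan is to run the same second-variation argument used in the proof of Proposition \ref{PAS}, but now keeping track of \emph{all} the equality cases it forces, so as to recover the full metric, not just a contradiction. First I would note that the hypotheses mean that any surface $S$ isotopic to the given area-$4\pi$ surface is non-contractible and hence has $A(S)\geq 4\pi$; thus that surface, call it $S_0$, is a stable minimal surface, and by Lemma \ref{RNHG} it is a normalized extreme RN sphere: $\kappa=1$, $E=\zeta$, $B=0$, $\Theta=0$, $K=0$ on $S_0$, so $(S_0,h)$ is the round unit sphere and $A(S_0)=4\pi$. Because $\Sigma\cong\mathbb{R}\times\mathbb{S}^2$ has two ends and the charge of $S_0$ is $\pm1$, every non-contractible $S$ separates the two ends and carries charge $\pm1$, exactly as in Proposition \ref{PASO}.

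Next I would show that in fact \emph{every} leaf of a suitable foliation by such spheres is minimal and totally geodesic with round unit metric. Here is the key point: repeat the construction of Proposition \ref{PAS}, choosing a solution $N_0>0$ of $\Delta N_0-|E|^2N_0=0$ on a region $\Omega_0$ between a convex sphere in one end and far out in the other end, decaying at the cylindrical-type end. Flowing $S_0$ by the geodesic congruence gives $\ddot A_{g_t}(S_0)|_{t=0}=-A''_{N_0}(S_0)$. If $N_0$ were \emph{not} constant on $S_0$ we would get $\ddot A<0$, hence $A_{g_t}(S_0)<4\pi$ for small $t>0$, producing (via \cite{MR678484}) a stable minimal sphere of area $<4\pi$ and charge one, contradicting the stability inequality with $\alpha=1$ together with the short-time bound $k_t=O(t^2)$ — this is verbatim the end of the proof of Proposition \ref{PAS}. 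Therefore \emph{every} positive solution $N_0$ of $\Delta N_0=|E|^2N_0$ on such an $\Omega_0$, decaying at the far end, must be constant on $S_0$. Running the linear-independence argument from Proposition \ref{PAS} in reverse, this forces the unique-continuation obstruction, which can only be avoided if $|E|^2$ is such that the equation has essentially a one-dimensional solution space adapted to the product structure; concretely, this pins down $|E|\equiv|E_0|$ and shows the foliation by the geodesic spheres $S_t$ consists of totally geodesic round unit spheres with $E=\zeta$, $B=0$, $K=0$ on each, i.e. the constraints and the stability equalities hold identically along $\Sigma$.

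Once one knows that $\Sigma$ is foliated by totally geodesic round unit $2$-spheres with unit electric flux and $K\equiv0$, $B\equiv0$, the metric must be a warped product $g=dx^2+d\Omega^2$ after reparametrizing the normal direction by arclength (the warping factor is constant $=1$ because each leaf is the round \emph{unit} sphere and is totally geodesic, so the second fundamental form of the leaves vanishes and the metric splits as a product). The constraint $R=2|E|^2$ then reads $0=2|E|^2$ pointwise \emph{unless} one is careful: in fact for the product $\mathbb{R}\times\mathbb{S}^2$ one has $R=2$ (the sphere curvature), forcing $|E|^2=1$, i.e. $E=\partial_x$ up to sign, and $\operatorname{div}E=0$ is automatic; $B=0$ and $K=0$ are already known. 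Completeness of the data and $\Sigma\cong\mathbb{R}\times\mathbb{S}^2$ then give that $x$ ranges over all of $\mathbb{R}$, so $(\Sigma;g,K;E,B)=(\mathbb{R}\times\mathbb{S}^2;\check g_0,\check K_0;\check E_0,\check B_0)$, which is exactly the standard ERNT$_1$ initial data.

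The main obstacle I anticipate is the middle step: upgrading "$S_0$ is extreme RN" to "\emph{every} nearby leaf is extreme RN and $|E|\equiv|E_0|$ globally". The second-variation computation only directly controls $S_0$; to propagate rigidity one must either (a) exploit that the same argument applies to every non-contractible minimal sphere and that there is a continuum of them sweeping out $\Sigma$, or (b) turn the statement "all decaying positive solutions of $\Delta N=|E|^2N$ are constant on $S_0$" into a rigid PDE statement via separation of variables on the presumptive product, and then bootstrap. Making (b) rigorous — showing the forced structure of the solution space of the Jacobi-type operator actually implies the metric is a product and $|E|^2\equiv 1$, rather than merely being consistent with it — is the delicate part, and I would expect to need an additional unique-continuation or ODE-uniqueness argument there, in the spirit of the rigidity arguments in \cite{ReirisZTBH}.
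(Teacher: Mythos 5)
Your plan breaks at the central step. The intermediate claim — that every positive decaying solution $N_0$ of $\Delta N_0 = |E|^2 N_0$ on $\Omega_0$ must be constant on $S_0$ — is in fact false, and the ERNT$_1$ data itself is a counterexample. On the product $\big((-\infty,0]\times\mathbb{S}^2,\,\check g_0\big)$ with $|E|^2\equiv 1$, separation of variables for $\Delta N=N$ with boundary data $f=1+\epsilon\, Y_\ell$ ($\ell\geq 1$, $\epsilon$ small) gives $N_0(\bar x,\cdot)=e^{\bar x}+\epsilon\, e^{\sqrt{1+\ell(\ell+1)}\,\bar x}\, Y_\ell$, which is positive, decays at $\bar x\to -\infty$, and is non-constant on every slice. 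So the flow along $N_0\zeta$ does produce $\ddot A_{g_t}(S_0)\big|_{t=0}=-\int_{S_0}|\nabla N_0|^2\,dA<0$ — and yet ERNT$_1$ is the rigid model and must survive your argument. The step that silently fails is the passage from $A_{g_t}(S_0)<4\pi$ to the existence of a stable minimal sphere $\hat S_t$ of area $<4\pi$. That step, via the minimization theory you cite, needs mean-convex barriers on both sides of the region. In Proposition \ref{PAS} the AF end supplies a genuinely strictly convex sphere; in the $\mathbb{R}\times\mathbb{S}^2$ setting there is no AF end, and in the ERNT$_1$ geometry the natural boundary slice is totally geodesic, so the minimizer can simply run off to the boundary (or the end) and no contradiction arises. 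Consequently the conclusion "$N_0$ is constant on $S_0$" is unavailable, and the subsequent inverse unique-continuation and the deduction $|E|\equiv|E_0|$ — which you already flag as the delicate part — rest on a premise that is false.

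The paper's proof of Lemma \ref{PP} is structured quite differently and avoids this trap. It first studies the family $\mathcal F$ of all non-contractible area-$4\pi$ spheres, proving (Aux-Prop.\ \ref{AUXP1}) that each is a normalized ERN sphere via Lemma \ref{RNHG}, that any two are disjoint (a cut-and-paste argument using the enclosed balls and rounding corners), and that $\bigcup_{S\in\mathcal F}\{S\}$ is closed (compactness of stable minimal spheres). Smoothness of the candidate foliation is then deduced from the Ricci null-space together with a second-variation and focussing argument, and the product structure follows (Aux-Prop.\ \ref{AUXP2}). Only in Aux-Prop.\ \ref{AUXP3}, the covering step $\bigcup_{S\in\mathcal F}\{S\}=\Sigma$, does a Prop.\ \ref{PAS}-type flow appear, and even there the metric is conformally dilated near the boundary spheres by a factor $\Psi_\delta\geq 1$ precisely to manufacture strictly mean-convex barriers and pin the area-minimizer away from the modified region; comparison with the unmodified metric then puts a member of $\mathcal F$ where it cannot be. Your final paragraph — reading off the product metric from a smooth, totally geodesic foliation by round unit spheres with $E=\zeta$, $K=0$, $B=0$ — is correct and matches the end of Aux-Prop.\ \ref{AUXP2}, but the route to that foliation needs the global area-minimization structure and the barrier-modification trick, not a lapse-equation rigidity statement.
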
 
\n For expository reasons it is better to divide the proof into three Auxiliary Propositions. In every one of them we let ${\mathcal F}$ {\it be the set of (compact, boundaryless and embedded) surfaces of area $4\pi$ and which are non-contractible inside $\Sigma$}.

\begin{AProposition}\label{AUXP1} Assume the hypothesis of Lemma \ref{PP}. 
Then, each $S\in {\mathcal F}$ is a (normalized) ERN sphere and every two different spheres in ${\mathcal F}$ are disjoint. Moreover the set $\bigcup_{S\in {\mathcal F}} \{S\}$ is closed as a set in $\Sigma$. 
\end{AProposition}

\begin{AProposition}\label{AUXP2} Assume the hypothesis of Lemma \ref{PP}.
If $\bigcup_{S\in {\mathcal F}} \{S\}= \Sigma$ then the data set is the standard ERNT$_{1}$ initial data $(\check{\Sigma};\check{g}_{0},\check{K}_{0};\check{E}_{0},\check{B}_{0})$.
\end{AProposition}

\begin{AProposition}\label{AUXP3} Assume the hypothesis of Lemma \ref{PP}. Then, $\bigcup_{S\in {\mathcal F}} \{S\}= \Sigma$.
\end{AProposition}
The proofs of the three propositions are presented consecutively.

\begin{proof}[\bf Proof of Aux-Proposition \ref{AUXP1}.]  By the hypothesis of Lemma \ref{PP} every non-contractible surface has area greater or equal than $4\pi$. Therefore the surfaces in ${\mathcal F}$, which have area equal to $4\pi$, must be minimal and stable (see footnote \ref{FOTN1}). By Lemma \ref{RNHG} they are (normalized) ERN spheres. We show next that two different spheres $S_{1}$ and $S_{2}$ in ${\mathcal F}$ (in case ${\mathcal F}$ has more than one element) must be disjoint. If $S_{1}\cap S_{2}\neq \emptyset$ then, being minimal surfaces, they must intersect transversely. We will think the surfaces $S_{i}$, $i=1,2$ as embedded in $\big(\mathbb{R}^{3}\setminus \{o\}\big)\sim \Sigma$. As the $S_{i}, i=1,2$ are non-contractible inside $\mathbb{R}^{3}\setminus \{o\}$ then there are open balls $B_{1}$ and $B_{2}$ in $\mathbb{R}^{3}$ containing the origin $o$ and such that $\partial \overline{B_{i}}=S_{i}$ for $i=1,2$. Define the manifolds
\begin{align*}
& {\mathcal V}_{1}:=S_{1}\cap {\rm Int}(B_{2}^{c}),\quad {\mathcal V}_{2}:=S_{2}\cap {\rm Int}(B_{1}^{c})\quad \text{and}\quad {\mathcal W}_{1}:=S_{1}\cap B_{2},\quad {\mathcal W}_{2}:=S_{2}\cap B_{1}
\end{align*}
where ${\rm Int}(B_{i}^{c})$ is the interior of the complement of $B_{i}$ (see Figure \ref{Figure4}). The manifolds ${\mathcal V}_{1}, {\mathcal V}_{2}, {\mathcal W}_{1}$ and ${\mathcal W}_{2}$ are pairwise disjoint and their closures have the same boundary. We will denote such boundary (a union of embedded circles indeed) by ${\mathcal B}$. 
We have $S_{1}=\overline{\mathcal V}_{1}\cup \overline{\mathcal W}_{1}$ and $S_{2}=\overline{\mathcal V}_{2}\cup \overline{\mathcal W}_{2}$ and for this reason it is
\be\label{PIP}
4\pi= A(\overline{\mathcal V}_{1})+A(\overline{\mathcal W}_{1}) \quad \text{and}\quad 4\pi=A(\overline{\mathcal V}_{2})+A(\overline{\mathcal W}_{2}).
\ee
The manifolds 
\ben
{\mathcal V}:=\overline{\mathcal V}_{1}\cup \overline{\mathcal V}_{2}\quad \text{and}\quad
{\mathcal W}:=\overline{\mathcal W}_{1}\cup \overline{\mathcal{W}}_{2},
\een
are embedded and smooth except at ${\mathcal B}$, where they have necessarily corners. Note that ${\mathcal V}$ and ${\mathcal W}$ are not necessarily connected (see Figure \ref{Figure4}). Moreover we have ${\mathcal V}=\partial (\overline{B_{1}\cup B_{2}})$ and ${\mathcal W}=\partial (\overline{B_{1}\cap B_{2}})$. Therefore, as $o\in B_{1}\cup B_{2}$ and $o\in B_{1}\cap B_{2}$, then at least one of the connected component of ${\mathcal V}$ and at least one of ${\mathcal W}$ divide $\mathbb{R}^{3}\setminus \{o\}$ into two connected components and are consequently non-contractible inside $\Sigma$.  
By (\ref{PIP}) if $A({\mathcal V}_{2})\leq A({\mathcal W}_{1})$ then $A({\mathcal V})\leq 4\pi$, while if $A({\mathcal V}_{2})\geq A({\mathcal W}_{1})$ then $A({\mathcal W})\leq 4\pi$. In any case we can round off the corners at ${\mathcal B}$ of either the manifold ${\mathcal V}$ or the manifold ${\mathcal W}$ to obtain one of area less than $4\pi$ and having at least one connected component non-contractible inside $\Sigma$. This is against hypothesis and therefore the surfaces $S_{1}$ and $S_{2}$ have to be disjoint. 
\begin{figure}[h]
\centering
\includegraphics[width=4cm,height=5.5cm]{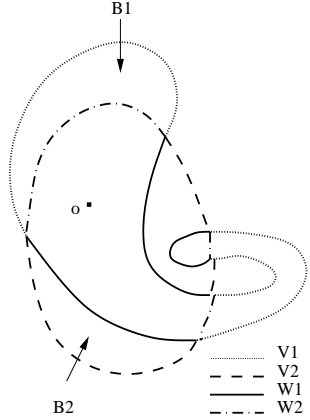}
\caption{Representation of the manifolds ${\mathcal V}_{1}$, ${\mathcal V}_{2}$, ${\mathcal W}_{1}$ and ${\mathcal W}_{2}$.}
\label{Figure4}
\end{figure}

It remains to be proved that the set $\bigcup_{S\in {\mathcal F}} \{S\}$ is closed in $\Sigma$. But if $p_{i}(\in S_{i}\in {\mathcal F})$ is a sequence of points in $\bigcup_{S\in {\mathcal F}} \{S\}$ with limit point $p_{\infty}$, then the sequence $S_{i}$ of (normalized) 
ERN spheres, and therefore of stable and area minimizing minimal surfaces, has a subsequence converging (in $C^{k}$ for every $k\geq 1$)  to a limit stable minimal sphere $S_{\infty}\ni p_{\infty}$, \cite{MR1068795}\footnote{Precisely there are embeddings $f_{i}:\mathbb{S}^{2}\rightarrow S_{i}$ converging  in $C^{k}$ to a covering immersion $f_{\infty}:\mathbb{S}^{2}\rightarrow S_{\infty}$. But in our case $\Sigma\sim \mathbb{R}^{3}\setminus \{o\}$ and therefore $S_{\infty}$ must be orientable, hence a sphere and $f_{\infty}$ an embedding.}. The sphere $S_{\infty}$ cannot be contractible inside $\Sigma$ otherwise the $S_{i}$'s would be contractible for  sufficiently big $i$. We have $4\pi=\lim A(S_{i})=A(S_{\infty})$, thus $S_{\infty}\in {\mathcal F}$ and therefore $p_{\infty}\in \bigcup_{S\in {\mathcal F}}\{S\}$.  
\end{proof}

\begin{proof}[\bf Proof of Aux-Proposition \ref{AUXP2}.]  Assume at the moment that the foliation ${\mathcal F}$ is smooth (see the definition of smooth foliation in \cite{MR824240}). We will be proving this later. Fix a sphere $S^{*}$ in ${\mathcal F}$ and denote by $\Sigma^{*}_{L}$ and $\Sigma^{*}_{R}$ the connected components of $\Sigma\setminus S^{*}$. For any $p\in \Sigma$ let $S(p)$ be the sphere in ${\mathcal F}$ containing $p$ and denote by $\Omega(p)$ the region enclosed by $S^{*}$ and $S(p)$. Then define the (smooth) function $\tilde{x}:\Sigma\rightarrow \mathbb{R}$ as 
\ben
\tilde{x}(p)=
\left\{
\begin{array}{lcl}
\ \ \, {\rm Vol}(\Omega(p)) & \text{ if } & p\in \Sigma^{*}_{R},\vs\\
-{\rm Vol}(\Omega(p)) & \text{ if } & p\in \Sigma^{*}_{L}
\end{array}
\right.
\een
This function is constant over every leaf and has nowhere zero gradient \footnote{This can be easily seen from the the fact that ${\mathcal F}$ is assumed smooth.}. Let $Y=\nabla \tilde{x}/|\nabla \tilde{x}|^{2}$ and note that as $Y(\tilde{x})=1$ the flow induced by $Y$ carries leaves (of ${\mathcal F}$)  into leaves (of ${\mathcal F}$). 
Fix an isometry $\psi:{\mathbb S}^{2}\rightarrow S^{*}$ and define the diffeomorphism $\Phi:\mathbb{R}\times \mathbb{S}^{2}\rightarrow \Sigma$ by sending a pair $(t,s)$ into the translation of $\psi(s)$ through the flow induced by $Y$ and by a parametric time $t$. 
Of course we have $\Phi_{*}\partial t=Y$. 
On the other hand if we denote by $h_{\tilde{x}}$ the induced metric on the leaves, then we have 
${\mathcal L}_{Y} h_{\tilde{x}}=0$ because each leaf is totally geodesic (here ${\mathcal L}$ is the Lie-derivative). Therefore we can write 
\ben
\Phi^{*}g=|\nabla \tilde{x}|^{2} d\tilde{x}^{2}+d\Omega^{2}
\een
We show now that $|\nabla \tilde{x}|$ is constant over every leaf. Indeed, as the areas of the spheres of ${\mathcal F}$ is $4\pi$ then the second variation of area of any sphere in ${\mathcal F}$ along $Y$ is zero, i.e. $A''_{Y}(S)=0$. This implies that $|Y|=1/|\nabla \bar{x}|$ is constant over every sphere (see the proof of Lemma \ref{RNHG}). The metric (\ref{RNMT}) is recovered by making a simple change of variables $\bar{x}=\bar{x}(\tilde{x})$, with $|\nabla \tilde{x}|=d\bar{x}/d\tilde{x}$. Finally by Lemma \ref{RNHG} we have $B=0$, $K=0$ and $E=\zeta$ with $\zeta$ a normal field to the leaves of ${\mathcal F}$ (i.e. either $\partial_{\bar{x}}$ or $-\partial_{\bar{x}}$). Hence we have $(g,K;E,B)=(\check{g},\check{K};\check{E},\check{B})$ as claimed. 

It remains to prove that the foliation ${\mathcal F}$ is smooth. We will show that the 1-distribution of lines perpendicular to the leaves of ${\mathcal F}$ is smooth. This implies that the distribution of the tangent planes to the leaves of ${\mathcal F}$ is smooth and the smoothness of ${\mathcal F}$ is then direct from Frobenius's theorem \cite{MR824240}. Let $S$ be a sphere of ${\mathcal F}$, let $\zeta$ be a normal field to it and let $h$ be the induced two-metric.  We will show that the Ricci curvature $Ric$ of $g$ over $S$ has the following form: $Ric(\zeta,\zeta)=0$ and for any $v,w\in TS$ we have $Ric(\zeta,v)=0$ and $Ric(v,w)=h(v,w)$. The 1-distribution of normal directions to ${\mathcal F}$ is then uniquely characterized by the null space of $Ric$ (i.e. $\{v\in TS, Ric(v,v)=0\}$), and is easily seen to be smooth because $Ric$ is smooth. 

Again let $S$ be a surface in ${\mathcal F}$ and $\zeta$ a unit normal field to it. Let $\{\gamma_{q}(\tau),q\in S, 0\leq \tau\leq \tau_{0}\}$ be the congruence of geodesics in $\Sigma$ starting at $\tau=0$ perpendicularly to $S$ in the direction of $\zeta$ and parametrized by the arc-length $\tau$. We will move $S$ by the vector field $V=\partial_{\tau} \gamma_{q}(\tau)$ and obtain a smooth one-parametric family of surfaces $S(\tau)$. We assume that $\tau_{0}$ is small enough that the surfaces $S(\tau)$ are embedded (and smooth).

In the forthcoming equations, but inside this proof, we will denote the mean curvature ${\rm tr}_{h}\Theta$ by $\mu$. Recall from Lemma \ref{RNHG} that over $S$ we have $\kappa=1$, $R=2$ and $\Theta=0$. Therefore from the general identity 
\be\label{GCO}
2\kappa - |\Theta|^{2}+\mu^{2}=R-2Ric(\zeta,\zeta)
\ee
we obtain $Ric(\zeta,\zeta)=0$. Also from ${\rm div}\, \Theta - d\mu =Ric(\zeta, -)$ we obtain $Ric(\zeta,v)=0$ for any $v\in TS$. To show that for any $v,w\in TS$ we have $Ric(v,w)=h(v,w)$ it is enough to prove that $\mathcal{L}_{V}{\Theta}=\dot{\Theta}=0$ because of the general identity (on $TS$)
\ben
\dot{\Theta}=-\mu \Theta + 2\Theta\circ \Theta + \kappa h - Ric
\een 
which gives $\dot{\Theta}(0)=h-Ric$ at $\tau=0$. Now, at any time $\tau\in (0,\tau_{0})$ we have
\begin{align}
\ddot{A}(S(\tau))= & \int_{S(\tau)} (\dot{\mu}+\mu^{2})\, dA\\
\nonumber = & \int_{S} \bigg(-\frac{|\Theta|^{2}}{2} +\frac{\mu^{2}}{2} + \kappa - |E|^{2} \bigg)\, dA \\ 
\nonumber = & \int_{S(\tau)} \bigg(-\frac{|\Theta|^{2}}{2} +\frac{\mu^{2}}{2} \bigg)\, dA + \bigg[4\pi - \int_{S(\tau)} |E|^{2}\, dA \bigg] \\ 
\nonumber \leq & \underbrace{\int_{S(\tau)} \bigg(-\frac{|\Theta|^{2}}{2} +\frac{\mu^{2}}{2} \bigg)\, dA}_{\displaystyle U(\tau)}
\end{align}
where: (i) to obtain the first inequality we use $d\dot{A}=\mu\, dA$, (ii) to pass from the second to the third line we use the focussing  (Riccati) equation $\dot{\mu}=-|\Theta|^{2}-Ric(\zeta,\zeta)$ in conjunction with (\ref{GCO}) and $R\geq 2|E|^{2}$, (iii) to pass from the second to the third line we use Gauss-Bonnet and (iv) from the third to the fourth we use (\ref{EESI}). On the other hand we can express $A(S(\tau))$ as
\ben
A(S(\tau))=4\pi +\int_{0}^{\tau} d\tilde{\tau}\int_{0}^{\tilde{\tau}} \ddot{A}(S(\tilde{\tilde{\tau}}))\, d\tilde{\tilde{\tau}}
\een
and we have $\ddot{A}(S(\tau))\leq U(\tau)=U(0)+U'(0)\tau+U''(0)\tau^{2}/2+O(\tau^{3})$ with $U(0)=0$, $U'(0)=0$ and 
\ben
U''(0)=-\frac{1}{2}\int_{S(0)} |\dot{\Theta}(0)|^{2}\, dA
\een
as can be easily seen using $\mu(0)=0$, $\dot{\mu}(0)=0$, $\Theta(0)=0$ and $\dot{\Theta}(0)=h-Ric$. Therefore, if $\dot{\Theta}(0)\neq 0$ then we would have $A(S(\tau))<4\pi$ for small $\tau$ which is against the hypothesis. This finishes the proof. 
\end{proof}

\begin{proof}[\bf Proof of Aux-Proposition \ref{AUXP3}.] 
We will proceed by contradiction and assume that $\bigcup_{S\in {\mathcal F}} \{S\}\neq \Sigma$. As by Aux-Proposition \ref{AUXP1} the set $\bigcup_{S\in {\mathcal F}} \{S\}$ is closed, then every connected component of $\Sigma\setminus \bigcup_{S\in {\mathcal F}}\{S\}$ is either an open region enclosed by two spheres in ${\mathcal F}$ or an open region enclosed by a sphere in ${\mathcal F}$ and one of the two ends of $\Sigma$. Thus, if there is only one connected component of $\Sigma\setminus \bigcup_{S\in {\mathcal F}}\{S\}$ then $\bigcup_{S\in {\mathcal F}}\{S\}$ must at least contain a closed region enclosed by a sphere in ${\mathcal F}$ and one end of $\Sigma$. As in Aux-Proposition \ref{AUXP2} the data set over such region must be $ERNT_{1}$. Because of this one can cut off such region and ``double'' the remaining one to construct a new data set $(\Sigma';g',K';E',B')$ in the hypothesis of Lemma \ref{PP} but with two connected components of $\Sigma'\setminus \bigcup_{S\in {\mathcal F}}\{S\}$.  

Assume then without loss of generality that there are at least two connected components of $\Sigma\setminus \bigcup_{S\in {\mathcal F}} \{S\}$. We want to prove that such data set cannot exist. This will be done exactly as in Proposition \ref{PAS}. For this reason the paragraphs below are first dedicated to construct a setup similar to the one in the proof of Proposition \ref{PAS}.   

For the discussion that follows the Figure \ref{Figure5} could be of great help. Denote two of the connected components of $\bigcup_{S\in {\mathcal F}}\{S\}$ by $\Omega_{L}$ and $\Omega_{R}$ ($L$ for ``Left'' and $R$ for ``Right''). Let $S_{L}$ and $S_{R}$ be any two  spheres embedded in $\Omega_{L}$ and $\Omega_{R}$ respectively and non contractible inside $\Sigma$. Denote by $\Omega_{LR}$ the region enclosed by them and including them, and by $\Sigma^{-}_{L}$ (resp. $\Sigma^{+}_{R}$) the connected component of $\Sigma\setminus S_{L}$ (resp. $\Sigma\setminus S_{R}$) not containing $S_{R}$ (resp. $S_{L}$). Also let $D^{*}>0$ be small enough such that
\begin{enumerate}
\item if $p\in \Sigma_{L}^{-}$ (resp. $p\in \Sigma_{R}^{+}$) and ${\rm dist}(p,S_{L})\leq D^{*}$ (resp. ${\rm dist}(p,S_{R})\leq D^{*}$) then $p\in \Omega_{L}$ (resp. $p\in \Omega_{R}$), and
\item for any $0<D\leq D^{*}$ the set $\{p\in \Sigma_{L}^{-},{\rm dist}(p,S_{L})=D\}$ (resp. $\{p\in \Sigma_{R}^{+},{\rm dist}(p,S_{R})=D\}$) is a smooth and embedded sphere. 
\end{enumerate}     
In this context define the sphere $S^{*}_{L}$ (resp. $S^{*}_{R}$) as $S^{*}_{L}:=\{p\in \Sigma_{L}^{-},{\rm dist}(p,S_{L})=D^{*}\}$ (resp. $S^{*}_{R}:=\{p\in \Sigma_{R}^{+},{\rm dist}(p,S_{R})=D^{*}\}$) and let $\Omega^{*}_{LR}$ be the set enclosed by $S^{*}_{L}$ and $S^{*}_{R}$ including them. As the components $\Omega_{L}$ and $\Omega_{R}$ are different there is at least one sphere $S_{0}\in {\mathcal F}$ embedded in $\Omega_{LR}$ and therefore in $\Omega^{*}_{LR}$. Now, on $\Omega_{LR}^{*}$ consider a positive solution $N=N_{0}$ of the maximal lapse equation 
\ben
\Delta N - \big(4\pi ({\bf T}_{00}+{\bf T}_{ij}g^{ij}) + |K|^{2}\big)N=\Delta N - (|E|^{2}+|B|^{2}+|K|^{2})N = 0 
\een
and that is not identically to a constant over $S_{0}$. The existence of such $N_{0}$ is shown in the same way as was done in Proposition \ref{PAS} and is left to the reader. Also in the same way as in Proposition \ref{PAS} construct from $N_{0}$ a time-like vector field ${\bf V}$ and from it a flow $(g_{t},K_{t};E_{t},B_{t})$ over $\Omega_{LR}^{*}$, with $0\leq t\leq t^{*}$ and for some $t^{*}$ small. As in Proposition \ref{PAS} now we have $\dot{A}_{g_{t}}(S_{0})=0$ and $\ddot{A}_{g_{t}}(S_{0})<0$. Therefore $A_{g_{t}}(S_{0})<4\pi$ in short times $t$.

Instead of $g_{t}$ we are going to consider a modified flow of metrics $\tilde{g}_{t}$ conformally related to $g_{t}$. This will help to guarantee the existence of certain stable minimal spheres. To the purpose of defining $\tilde{g}_{t}$ consider the following function of $z\in [0,D^{*}]$,
\ben
\Psi_{\delta}(z)=1+e^{\displaystyle -1/z + 1/(D^{*}+\delta-z)}
\een
where $\delta$ is a constant to be fixed soon below. Observe that $\Psi_{\delta}(0)=1$ and that all the right-sided derivatives of $\Psi_{\delta}$ are zero at $z=0$. Observe too that $\Psi_{\delta}\geq 1$. We then define $\tilde{g}_{t}$ by
\ben
\tilde{g}_{t}(p)=
\left\{
\begin{array}{lcl}
g_{t}(p)  & \text{if} & p \in \Omega_{LR},\vs\\
\Psi_{\delta}\big(d(p,S_{L})\big)\, g_{t}(p) & \text{if} & p\in \Omega_{LR}^{*}\cap \Sigma^{-}_{L},\vs\\
\Psi_{\delta}\big(d(p,S_{R})\big)\, g_{t}(p) & \text{if} & p\in \Omega_{LR}^{*}\cap \Sigma^{+}_{R}
\end{array}
\right.
\een
Now chose $t^{*}$ and $\delta>0$ small enough that the boundaries of $(\Omega^{*}_{LR},\tilde{g}_{t})$ are strictly mean convex (in the outgoing directions) for any $0\leq t\leq t^{*}$. Once this is granted we can consider for every $t\leq t^{*}$ a sphere $\tilde{S}_{t}$ minimizing the $\tilde{g}_{t}$-area among all the spheres embedded in $\Omega^{*}_{LR}$ and isotopic to $S_{0}$ \cite{MR678484}. 
\begin{figure}[h]
\centering
\includegraphics[width=11cm,height=3.5cm]{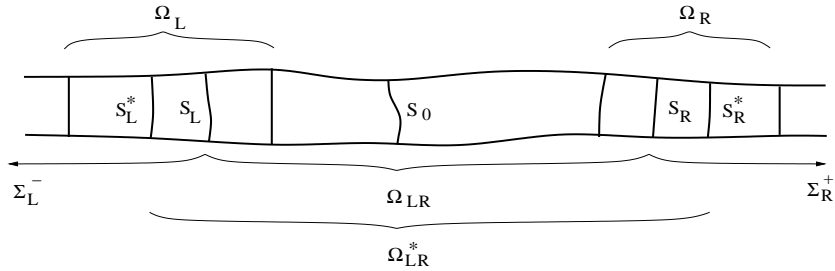}
\caption{Representation of the construction in the proof of Aux-Proposition \ref{AUXP3}.}
\label{Figure5}
\end{figure}

Until now we have not done any particular progress in the proof. The key point of the proof lies in showing that one chose $t^{*}$ smaller if necessary in such a way that the area minimizing spheres $\tilde{S}_{t}$ are embedded in ${\rm Int}(\Omega_{LR})$ and therefore do not intersect the regions where the metric $g_{t}$ was conformally modified. Once this is shown a contradiction is proved following exactly the same argument as in Proposition \ref{PAS} and will not be repeated here. 

Suppose then that there is a sequence of times $t_{i}\downarrow 0$ such that for each $t_{i}$ the minimal and stables sphere $\tilde{S}_{t_{i}}$ is not strictly embedded in ${\rm Int}(\Omega_{LR})$. Take then a subsequence (indexed again by ``i'') such that $\tilde{S}_{t_{i}}$ converges to a stable minimal sphere $\tilde{S}_{0}$ intersecting $\Omega^{*}_{LR}\setminus {\rm Int}(\Omega_{LR})$. As the $\tilde{S}_{t_{i}}$ are non contractible inside $\Sigma$ then neither is $\tilde{S}_{0}$. Moreover as $A_{\tilde{g}_{t_{i}}}(\tilde{S}_{t_{i}})\leq A_{\tilde{g}_{t}}(S_{0})<4\pi$ then $A_{\tilde{g}_{t}}(\tilde{S}_{0})\leq 4\pi$. But $A_{\tilde{g}_{0}}(\tilde{S}_{0})\geq A_{g}(\tilde{S}_{0})$ because the conformal factor is greater or equal than one. Then, the sphere $\tilde{S}_{0}\subset \Sigma$ has $A(\tilde{S}_{0})\leq 4\pi$. So it must be $A(\tilde{S}_{0})=4\pi$ by the hypothesis of Lemma \ref{PP} and by Lemma \ref{RNHG} it must be a (normalized) ERN sphere. Thus $\tilde{S}_{0}\in {\mathcal F}$. But this is a contradiction as the set $\Omega^{*}_{LR}\setminus {\rm Int}(\Omega_{LR})$ does not contain any point of $\bigcup_{S\in {\mathcal F}}\{S\}$. 
\end{proof} 

\section{Perturbations containing MOTS}\label{SECDSU}

The following proposition is direct from standard elliptic estimates and is left to the reader (use $\varphi={\rm id}$ and recall that $\bar{g}=\phi^{2} g_{0}$, $\bar{K}=0$, $\bar{E}=\hat{E}_{0}\phi^{-6}$ and $\bar{B}=0$). It says that the data sets constructed in Section \ref{SECDS} are small in the sense of Definition \ref{DPER}.

\begin{Proposition}\label{P1} Given $0<\hat{\epsilon}<1/16$ and integer $k\geq 1$ there is $\varepsilon=\varepsilon(\hat{\epsilon},k)>0$ such that for any $\hat{x}\in (-\infty,-1]$ the data set $(\bar{g},\bar{K};\bar{E},\bar{B})$ constructed in Section \ref{SECDS} out of $\hat{\epsilon}$ and $\hat{x}$, is $\varepsilon$-close in $C^{k}$ to the standard ERN$_{1}$ initial data. Moreover $\varepsilon\rightarrow 0$ if we fix $k$ and let $\hat{\epsilon}\rightarrow 0$.
\end{Proposition}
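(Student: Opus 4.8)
The plan is to use the identity map $\varphi={\rm id}$ (both $\Sigma_{0}$ and $\bar{\Sigma}$ being the same manifold $(-\infty,0)\times{\mathbb S}^{2}$ in the coordinate $x$) and to bound each of the four tensor differences $\varphi^{*}\bar{g}-g_{0}$, $\varphi^{*}\bar{K}-K_{0}$, $\varphi^{*}\bar{E}-E_{0}$, $\varphi^{*}\bar{B}-B_{0}$ in $C^{k}_{g_{0}}(\Sigma_{0})$ by a quantity $\varepsilon(\hat{\epsilon},k)$ that is \emph{independent of $\hat{x}$} and tends to $0$ as $\hat{\epsilon}\to 0$ with $k$ fixed. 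Two of the four are trivial: $\bar{K}=0=K_{0}$ and $\bar{B}=0=B_{0}$ by construction. For the other two, since $\bar{g}=\phi^{2}g_{0}$ and $\bar{E}=\hat{E}\phi^{-6}$, I would write $\psi:=\phi-1$, so that $\bar{g}-g_{0}=(2\psi+\psi^{2})g_{0}$ and $\bar{E}-E_{0}=(\hat{E}-E_{0})\phi^{-6}+E_{0}(\phi^{-6}-1)$; using the barriers $\phi_{-}\leq\phi\leq\phi_{+}$, which pin $\phi$ uniformly between two positive constants, the whole statement reduces to proving $\|\psi\|_{C^{k}_{g_{0}}}=O(\hat{\epsilon})$ and $\|\hat{E}-E_{0}\|_{C^{k}_{g_{0}}}=O(\hat{\epsilon})$ uniformly in $\hat{x}$.

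For the electric-field difference I would argue essentially by inspection. By (\ref{EHAT}), $\hat{E}-E_{0}=\hat{\lambda}\,(\star d\star(\chi^{*}_{\hat{x}}\omega))^{\sharp}$ is supported in $[\hat{x}-1,\hat{x}+1]\times{\mathbb S}^{2}$ and is, in the $x$-variable, merely a translate of a fixed smooth field built from the fixed form $\omega$. The point that makes all constants $\hat{x}$-independent is that on $\{x\leq -1\}$ --- which contains all these supports since $\hat{x}\leq -2$ --- one has $g_{0}=r^{4}dx^{2}+r^{2}d\Omega^{2}$ with $r=1/(1-e^{x})$ trapped between $1$ and $1/(1-e^{-1})$; hence $g_{0}$, and all of its derivatives, is uniformly comparable to the fixed product metric $dx^{2}+d\Omega^{2}$, and $|E_{0}|^{2}=1/r^{4}$ is bounded above and below there by positive constants. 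So the $C^{k}_{g_{0}}$-norm of the translated field is, up to an $\hat{x}$-independent constant, the $C^{k}$-norm in the product metric of the corresponding field built from $\omega$; and since $\hat{\lambda}=\hat{\lambda}_{\hat{x},\hat{\epsilon}}$ is chosen so that $\hat{\epsilon}=\sup|1-|\hat{E}|^{2}/|E_{0}|^{2}|$ while $|E_{0}|^{2}$ is pinched on the support, this forces $\hat{\lambda}=O(\hat{\epsilon})$, hence $\|\hat{E}-E_{0}\|_{C^{k}_{g_{0}}}=O(\hat{\epsilon})$, uniformly in $\hat{x}$.

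For $\psi=\phi-1$ I would first read off $\|\psi\|_{C^{0}}\leq 4\hat{\epsilon}$ (uniform in $\hat{x}$) directly from the explicit formulas (\ref{LB}) and (\ref{UB}) for $\phi_{\pm}$. To upgrade this to $C^{k}$, I would rewrite the Lichnerowicz equation (\ref{LEQ}) as a semilinear equation $\Delta_{g_{0}}\psi=2|E_{0}|^{2}\psi+(\text{terms that are }O(\hat{\epsilon})\text{ in }C^{k-1}_{g_{0}})$, the smallness of the lower-order terms coming from the previous paragraph together with $\|\psi\|_{C^{0}}=O(\hat{\epsilon})$; on $\{x\leq -1\}$ this is a uniformly elliptic equation with uniformly bounded smooth coefficients, again because $g_{0}$ is uniformly comparable to the product metric there. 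Standard interior $W^{k,p}$ (or Schauder) estimates along a chain of coordinate balls covering the support, combined with the exponential decay of the barriers away from $x=\hat{x}$ to control $\psi$ where the forcing vanishes, then give $\|\psi\|_{C^{k+1}_{g_{0}}}=O(\hat{\epsilon})$ uniformly in $\hat{x}$. (One notes in passing that $\phi$ is genuinely smooth by elliptic regularity, since $|\hat{E}|^{2}$ and $|E_{0}|^{2}$ are smooth, so the merely $C^{0}$ matching of the barriers at $x=-1$ is irrelevant here.) Assembling the two estimates yields $\|\bar{g}-g_{0}\|_{C^{k}_{g_{0}}}+\|\bar{E}-E_{0}\|_{C^{k}_{g_{0}}}=O(\hat{\epsilon})$ uniformly in $\hat{x}$, which is the claim, and visibly $\varepsilon(\hat{\epsilon},k)\to 0$ as $\hat{\epsilon}\to 0$.

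I do not expect any genuine obstacle --- this is the routine elliptic input the authors allude to --- so the only ``hard part'' is the bookkeeping that makes every constant independent of the concentration parameter $\hat{x}$. That is exactly what the observation about $\{x\leq-1\}$ buys: the portion of the ERN$_{1}$ slice on which the perturbation lives is a geometrically bounded perturbation of the round cylinder $\mathbb{R}\times{\mathbb S}^{2}$, so all elliptic constants and all tensor norms may be taken uniform in the location of the fixed profile $\omega$ along the cylindrical end.
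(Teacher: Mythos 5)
Your proposal is correct and is precisely the argument the paper intends: the paper declares Proposition \ref{P1} ``direct from standard elliptic estimates'' with $\varphi=\mathrm{id}$ and leaves it to the reader, and your write-up supplies exactly that — $C^{0}$ smallness of $\phi-1$ from the barriers (\ref{LB})--(\ref{UB}), smallness of $\hat{E}-E_{0}$ from the normalization of $\hat{\lambda}$, and a $\hat{x}$-uniform elliptic bootstrap using that on $\{x\leq -1\}$ the metric $g_{0}$ is uniformly comparable to the product cylinder. The only point worth tightening is that the normalization $\hat{\epsilon}=\sup|1-|\hat{E}|^{2}/|E_{0}|^{2}|$ forces $\hat{\lambda}\to 0$ only for the branch of $\hat{\lambda}$ chosen near zero (as the construction intends), and in the degenerate case $\langle E_{0},(\star d\star\chi_{\hat{x}}^{*}\omega)^{\sharp}\rangle\equiv 0$ one gets $O(\sqrt{\hat{\epsilon}})$ rather than $O(\hat{\epsilon})$, which still suffices for $\varepsilon(\hat{\epsilon},k)\to 0$.
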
 

In what follows we explain a {\it pointed convergence} that will be useful inside the proof of Proposition \ref{P2}. We keep identifying $\Sigma_{0}$ to $(-\infty,0)\times {\mathbb S}^{2}$ as we did before, in particular the factor $(-\infty,0)$ is the range of $x$. Let $\hat{x}_{i}$ be a sequence diverging to minus infinity, i.e $\lim\, \hat{x}_{i}=-\infty$ and let $s_{0}$ be a fixed point in $\mathbb{S}^{2}$. 
If we ``follow'' the ERN$_{1}$ metric $g_{0}$ around the sequence of points $\hat{x}_{i}\times s_{0}$ then, as we know, it converges to the metric $\check{g}_{0}$ of the standard ERNT$_{1}$ initial data. The standard mathematical way of saying this is that the pointed sequence 
$(\Sigma_{0};g_{0};\hat{x}_{i}\times s_{0})$ converges smoothly to $(\mathbb{R}\times \mathbb{S}^{2};\check{g}_{0},0\times s_{0})$. We write this convergence by saying that for any integers $n\geq 1$ and $k\geq 1$ we have  
\be\label{CONV1}
\lim_{i\uparrow \infty}\ \ \big\|\, \varphi^{*}_{n,i}\, g_{0} - \check{g}_{0}\, \big\|_{C^{k}_{\check{g}_{0}}([-n,n]\times \mathbb{S}^{2})}\ =\ 0,
\ee
where $\varphi_{n,i}:[-n,n]\times {\mathbb S}^{2}\rightarrow [-n+\hat{x}_{i},n+\hat{x}_{i}]\times {\mathbb S}^{2}(\subset \Sigma_{0})$ is the map $\varphi_{n,i}(x,s)=(x+\hat{x}_{i},s)$ (note that $\varphi_{n,i}(0\times s_{0})=\hat{x}_{i}\times s_{0}$ for all $i$). 
More generally, the pointed sequence of initial data $(\Sigma_{0};g_{0};E_{0};\hat{x}_{i}\times s_{0})$ converges smoothly to $(\mathbb{R}\times \mathbb{S}^{2};\check{g}_{0};\check{E}_{0};0\times s_{0})$ because in addition to (\ref{CONV1}) we have 
\be\label{CONV2}
\lim_{i\uparrow \infty}\ \ \big\|\, \varphi^{*}_{n,i}\, E_{0} - \check{E}_{0}\, \big\|_{C^{k}_{\check{g}_{0}}([-n,n]\times \mathbb{S}^{2})}\ =\ 0,
\ee
for any $n\geq 1$ and $k\geq 1$. Fix now $0<\hat{\epsilon}<1/16$ and consider the sequence of vector fields $\hat{E}_{i}:=\hat{E}_{\hat{x}_{i},\hat{\epsilon}}$ given in (\ref{EHAT}) out of $\hat{x}=\hat{x}_{i}$, $\hat{\epsilon}$ and $\omega$. In the same way as before, this sequence converges smoothly to $\hat{E}_{\infty}:=\check{E}_{0} + \hat{\lambda}_{\infty}(\star\, d\, \star\, \omega_{\infty})^{\sharp}$ where $\omega_{\infty}$ is the pull-back of $\omega$ by the map $(x,s)\rightarrow (x-2,s)$ from $[-1,1]\times \mathbb{S}^{2}$ into $[-3,-1]\times \mathbb{S}^{2}$ and $\hat{\lambda}_{\infty}$ is a constant such that 
\be\label{EET}
\sup \bigg| 1-\frac{|\hat{E}_{\infty}|^{2}_{\check{g}_{0}}}{|\check{E}_{0}|^{2}_{\check{g}_{0}}}\bigg|=\hat{\epsilon}
\ee
As before this convergence is expressed by the limit
\be\label{CONV3}
\lim_{i\uparrow \infty}\ \ \big\|\, \varphi^{*}_{n,i}\, \hat{E}_{i} - \hat{E}_{\infty}\, \big\|_{C^{k}_{\check{g}_{0}}([-n,n]\times \mathbb{S}^{2})}\ =\ 0,
\ee 
for any $n\geq 1$ and $k\geq 1$. Note that as $\omega_{\infty}$ has support in $[-1,1]\times \mathbb{S}^{2}$ then $\hat{E}_{\infty}=\check{E}_{0}$ outside $[-1,1]\times \mathbb{S}^{2}$. In particular $|\hat{E}_{\infty}|_{\check{g}_{0}}=1$ outside $[-1,1]\times \mathbb{S}^{2}$ because $|\check{E}_{0}|_{\check{g}_{0}}=1$.

Now, let $\phi_{i}$ be the sequence of conformal factors constructed in Section \ref{SECDS} out of $\hat{x}_{i}$ and the fixed $\hat{\epsilon}$. Using standard elliptic estimates and the barrier bounds (\ref{LB})-(\ref{UB}) one easily shows that the sequence $\phi_{i}$ has a subsequence (indexed again by ``$i$'') converging smoothly to a limit smooth conformal factor $\phi_{\infty}>0$. Namely,
\be\label{CONV4}
\lim_{i\uparrow \infty}\ \ \big\|\, \varphi^{*}_{n,i}\, \phi_{i} - \phi_{\infty}\, \big\|_{C^{k}_{\check{g}_{0}}([-n,n]\times \mathbb{S}^{2})}\ =\ 0,
\ee
for any $n\geq 1$ and $k\geq 1$. Moreover because of (\ref{CONV1}), (\ref{CONV2}) and (\ref{CONV3}) the limit conformal factor $\phi_{\infty}$ satisfies the limit L-equation
\be\label{LEQL}
\Delta_{\check{g}_{0}} \phi_{\infty}=2|\check{E}_{0}|_{\check{g}_{0}}^{2}\phi_{\infty} - 2|\hat{E}_{\infty}|_{\check{g}_{0}}^{2}\phi_{\infty}^{-3}.
\ee
The convergences (\ref{CONV1}), (\ref{CONV2}), (\ref{CONV3}) and (\ref{CONV4}) also show that the pointed subsequence $(\bar{g}_{i}=\phi_{i}^{4}g_{0};\bar{E}_{i}=\hat{E}_{i}=\phi^{-6}\hat{E}_{i};\hat{x}_{i}\times s_{0})$ converges smoothly to $(\bar{g}_{\infty}:=\phi^{4}_{\infty}\check{g}_{0};\bar{E}_{\infty}:=\phi^{-6}_{\infty}\hat{E}_{\infty},0\times s_{0})$. 

It is an important fact that the limit data set $(\mathbb{R}\times \mathbb{S}^{2};\bar{g}_{\infty};\bar{E}_{\infty})$ is never the ERNT$_{1}$ initial data. If this were the case then we would have $|\bar{E}_{\infty}|^{2}_{\bar{g}_{\infty}}=1$ and therefore $|\hat{E}_{\infty}|_{\check{g}_{0}}^{2}\phi^{-8}_{\infty}=1$. Plugging this in (\ref{LEQL}) and recalling that $|\check{E}_{0}|_{\check{g}_{0}}=1$ we would obtain 
\ben
\Delta_{\check{g}_{0}}\phi_{\infty}=2\phi_{\infty} - 2\phi_{\infty}^{5}.
\een
Then observe that as $|\hat{E}_{\infty}|_{\check{g}_{0}}=1$ outside $[-1,1]\times \mathbb{S}^{2}$ we would have $\phi_{\infty}=1$ also outside $[-1,1]\times \mathbb{S}^{2}$. Then, as the constant function one is a solution of (\ref{LEQL}) we must have $\phi_{\infty}=1$ everywhere by the unique continuation principle. Thus, it would be $|\hat{E}_{\infty}|_{\check{g}_{0}}=1$ everywhere, contradicting (\ref{EET}). 

Observe that any non-contractible surface $S$ embedded in $(\mathbb{R}\times \mathbb{S}^{2};\bar{g}_{\infty})$ must have $\bar{g}_{\infty}$-area greater or equal than $4\pi$. To see this use Proposition \ref{PASO} to have $A_{\bar{g}_{\infty}}(S)=\lim A_{\bar{g}_{i}}(\varphi_{n,i}(S))$ $\geq 4\pi$. Similarly we have $|Q_{\rm E}(S)|=\lim |Q_{\rm E}(\varphi_{n,i}(S))|=1$. We can now use this information together with Lemma \ref{PP} and the fact that the limits $(\bar{g}_{\infty};\bar{E}_{\infty})$ are not the ERNT$_{1}$ initial data, to conclude that for any non-contractible embedded $S$ we have $A_{\bar{g}_{\infty}}(S)>4\pi$. This will be crucially used in the following proposition.

\begin{Proposition}\label{P2} Let $0<\hat{\epsilon}<1/16$. Then there is $\hat{x}_{0}=\hat{x}_{0}(\hat{\epsilon})$ such that for any $\hat{x}\leq \hat{x}_{0}$ the data set $(\bar{g},\bar{K};\bar{E},\bar{B})$ constructed in Section \ref{SECDS} out of $\hat{\epsilon}$ and $\hat{x}$ possess an embedded minimal and stable sphere $M$ separating the two ends. Because $\bar{K}=0$ such sphere is also a MOTS (to the past and to the future).
\end{Proposition}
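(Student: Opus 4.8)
The plan is to run a compactness/contradiction argument playing the finite-$\hat{x}$ data sets against their ``limit at cylindrical infinity'' $(\mathbb{R}\times\mathbb{S}^2;\bar g_\infty;\bar E_\infty)$ constructed in the paragraphs preceding the statement. The key point already established there is that the limit data set is \emph{not} the ERNT$_1$ initial data, and hence (via Lemma \ref{PP} and the remark following it) every non-contractible embedded surface in $(\mathbb{R}\times\mathbb{S}^2;\bar g_\infty)$ has area \emph{strictly} greater than $4\pi$. From this I would first deduce that $(\mathbb{R}\times\mathbb{S}^2;\bar g_\infty)$ contains an embedded minimal and stable sphere $M_\infty$ separating the two ends: work between two large spheres, one far out each end of the cylinder, where one slightly modifies $\bar g_\infty$ near each cutting sphere to make the boundary strictly mean convex (exactly the device used in Proposition \ref{PAS} and Aux-Proposition \ref{AUXP3}), minimize $\bar g_\infty$-area in the non-contractible isotopy class via \cite{MR678484}, get a stable minimal sphere of area $\le$ that of the cutting spheres, and note it cannot touch the modified collar regions because there no point of a small-area non-contractible surface can sit (the modified metric is $\ge\bar g_\infty$ there and the area bound $>4\pi$ forces the minimizer into the unmodified interior). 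So $M_\infty\subset(\mathbb{R}\times\mathbb{S}^2;\bar g_\infty)$ is a genuine embedded stable minimal sphere, automatically a MOTS since $\bar K_\infty=0$.

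\textbf{Transplanting $M_\infty$ to finite $\hat x$.} Next I would use the pointed smooth convergence $(\bar g_i;\hat x_i\times s_0)\to(\bar g_\infty;0\times s_0)$ recorded in the text (the limits (\ref{CONV1})--(\ref{CONV4})) to build, for all large $i$, an embedded \emph{almost}-minimal sphere near $\varphi_{n,i}(M_\infty)$ and then improve it. Concretely: fix $n$ so large that $M_\infty\subset(-n,n)\times\mathbb{S}^2$; on $[-n,n]\times\mathbb{S}^2$ the metrics $\varphi_{n,i}^*\bar g_i$ converge in $C^k$ to $\bar g_\infty$. I would set up the same two strictly-mean-convex barrier spheres, now with respect to $\varphi_{n,i}^*\bar g_i$ (convexity is a $C^2$-open condition, so it persists for large $i$ after the same collar modification $\Psi_\delta$ trick), minimize $\varphi_{n,i}^*\bar g_i$-area in the non-contractible class to obtain a stable minimal sphere $M_i'$, and argue it stays in the unmodified region: its area is at most (area of the barrier spheres) which converges to something $>4\pi$ but can be taken $<$ (some fixed bound), while by Proposition \ref{PASO} $|Q_{\rm E}(M_i')|=1$ so the stability inequality with $\alpha=1$ plus the collar modification being $\ge$ the original metric forces $M_i'$ off the collars for large $i$ — this is literally the argument in Aux-Proposition \ref{AUXP3}. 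Pushing $M_i'$ back by $\varphi_{n,i}$ gives an embedded stable minimal sphere $M\subset(\bar\Sigma;\bar g_i)$ which is non-contractible, i.e. separates the two ends. Since $\bar K=0$ identically, $M$ is a MOTS both to the future and the past. Taking $\hat x_0$ to be $\hat x_i$ for any such large $i$ (equivalently, any $\hat x\le\hat x_0$ by the same argument applied along any sequence diverging to $-\infty$) gives the claimed $\hat x_0(\hat\epsilon)$.

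\textbf{Main obstacle.} The genuinely delicate step is showing the area-minimizing spheres do not escape into the cut-off/collar regions — both for $M_\infty$ itself and, uniformly in $i$, for the finite-$\hat x$ minimizers $M_i'$. For $M_\infty$ this rides on the strict inequality ``area $>4\pi$'' (Lemma \ref{PP} applied to the non-ERNT$_1$ limit), which is why the preceding paragraphs worked so hard to rule out $\bar g_\infty$ being ERNT$_1$. For the $M_i'$ the subtlety is that one needs the escape-prevention to kick in for \emph{all} sufficiently negative $\hat x$ at once; this follows because any sequence $\hat x_i\to-\infty$ produces (after passing to a subsequence) some such non-ERNT$_1$ limit $\bar g_\infty$, and if the conclusion failed along a sequence one would extract a limit minimal sphere intersecting the collar, which as in Aux-Proposition \ref{AUXP3} would have to lie in $\mathcal F$ — impossible since the collar region meets no element of $\mathcal F$. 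One also needs the minor bookkeeping that the collar modification can be chosen uniformly (same $\delta$, same $\Psi_\delta$) for all large $i$ by the $C^2$-convergence of the metrics, and that ``stable minimal sphere'' passes to limits by \cite{MR1068795}; both are routine given the machinery already invoked. Everything else — existence of the minimizer, the charge computation, the MOTS identification via $\bar K=0$ — is direct from \cite{MR678484}, Proposition \ref{PASO}, and the definitions.
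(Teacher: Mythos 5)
Your overall strategy—pointed compactness against the limit data set $(\mathbb{R}\times\mathbb{S}^{2};\bar g_{\infty};\bar E_{\infty})$, the collar modification $\psi_{\delta}$, and the crucial fact that $\bar g_{\infty}$ is not ERNT$_{1}$—is the same as the paper's. But there are two genuine gaps, both stemming from the same oversight: you never establish the upper area bound $\to 4\pi$ that is indispensable for the contradiction.

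\textbf{The $M_{\infty}$ step does not go through.} You claim the collar-modified minimizer in $(\mathbb{R}\times\mathbb{S}^{2};\bar g_{\infty})$ ``cannot touch the modified collar regions because there no point of a small-area non-contractible surface can sit.'' This is backwards. The limit metric satisfies $|\phi_{\infty}-1|\le 4\hat\epsilon/\cosh x$, so the cross-sectional spheres $\{x=a\}$ have $\bar g_{\infty}$-area tending to $4\pi$ as $|a|\to\infty$, from above (since every non-contractible sphere has area strictly greater than $4\pi$). The collar regions are therefore exactly where almost-optimal competitors live; the infimum of area over non-contractible spheres in the limit space is $4\pi$ and is \emph{not} attained, so a straightforward minimization cannot produce $M_{\infty}$. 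The paper's proof is deliberately engineered to avoid ever asserting that a stable minimal sphere exists in the limit space.

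\textbf{The finite-$\hat x$ contradiction is missing its key ingredient.} You fall back on ``literally the argument in Aux-Proposition \ref{AUXP3}'': the minimizers touch the collar, their limit would lie in $\mathcal F$, contradiction. But in Aux-Proposition \ref{AUXP3} the membership in $\mathcal F$ comes from a prior bound $A(\tilde S_{t_{i}})<4\pi$, which combined with $A\ge 4\pi$ forces area $=4\pi$. Here you have no analogous upper bound. The ``area of the barrier spheres'' is useless: the boundary spheres sit at $x=-1+\hat x_{i}$, where $\psi_{\delta}(-1)=1+e^{-1+1/\delta}$ is large, so those competitors have area far above $4\pi$. The paper's actual device is the sphere $S_{\hat x_{i}/2}=\{x=\hat x_{i}/2\}$: there the conformal factor is pinned between $1\pm 4\hat\epsilon/\cosh(\hat x_{i}/2)\to 1$, and the collar modification is the identity, so $A_{\tilde g_{i}}(S_{\hat x_{i}/2})\to 4\pi$. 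This, together with $A_{\tilde g_{i}}(\tilde S_{i})\ge A_{\bar g_{i}}(\tilde S_{i})\ge 4\pi$ (Proposition \ref{PASO}), forces $A_{\tilde g_{i}}(\tilde S_{i})\to 4\pi$, hence $A_{\tilde g_{\infty}}(\tilde S_{\infty})=4\pi$, which contradicts the strict inequality $A_{\tilde g_{\infty}}(\tilde S_{\infty})>4\pi$ for the non-ERNT$_{1}$ limit. Without the $S_{\hat x_{i}/2}$ comparison (or something playing the same role), you cannot conclude the limit sphere has area $4\pi$, and the proof does not close. I'd also note that once you have this ingredient, the detour through constructing $M_{\infty}$ becomes unnecessary: the cleaner route (the paper's) is a single contradiction argument assuming \emph{no} stable minimal sphere exists for a sequence $\hat x_{i}\to-\infty$, with the above area squeeze delivering the contradiction directly.
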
 

\begin{proof}[\bf Proof.] We will proceed by contradiction. Assume therefore that there is $0<\hat{\epsilon}<1/16$ and a sequence $\hat{x}_{i}\rightarrow -\infty$  such that, if we denote by $((-\infty,0)\times {\mathbb S}^{2};\bar{g}_{i};\bar{E}_{i})$ the data sets constructed out of $\hat{\epsilon}$ and $\hat{x}_{i}$, then none of the manifolds $((-\infty,0)\times \mathbb{S}^{2};\bar{g}_{i})$ possess a stable minimal sphere $M$ separating the two ends. We will see that this leads to a contradiction. 

Firstly, as commented before, one can take a subsequence of the pointed sequence $((-\infty,0)\times {\mathbb S}^{2};\bar{g}_{i};\bar{E}_{i};\hat{x}_{i}\times s_{0})$ converging (in the pointed sense) to a smooth data set $((-\infty,\infty)\times {\mathbb S}^{2}; \bar{g}_{\infty},\bar{E}_{\infty})$. Moreover and as commented above, for any embedded sphere $S$ isotopic to $S_{0}:=0\times {\mathbb S}^{2}$ we have $A_{\bar{g}_{\infty}}(S)> 4\pi$.  

Secondly, let $\psi_{\delta}(z)$ be the smooth real function of the one variable $z\in [-1,\infty]$ defined as
\ben
\psi_{\delta}(z)=
\left\{
\begin{array}{lcl}
1+e^{\displaystyle 1/z+1/(z+1+\delta)} & {\rm if} & z\in [-1,0],\\
1 & {\rm if} & z\in [0,\infty)
\end{array}
\right.
\een
With this function define the metric $\tilde{g}_{i}=[\psi_{\delta}(x-\hat{x}_{i})]\, \bar{g}_{i}$  on the manifold $[-1+\hat{x}_{i},0)\times {\mathbb S}^{2}$ and set $\delta>0$ small enough that the boundary $(-1+\hat{x}_{i})\times {\mathbb S}^{2}$ of $[-1+\hat{x}_{i},0)\times {\mathbb S}^{2}$ is strictly mean convex (in the direction of decreasing $x$) for all $i$. 
Of course the pointed sequence $([-1+\hat{x}_{i},0)\times {\mathbb S}^{2}; \tilde{g}_{i}; \hat{x}_{i}\times s_{0})$ converges to $([-1,\infty)\times {\mathbb S}^{2};\tilde{g}_{\infty})$ where $\tilde{g}_{\infty}=\psi_{\delta}\, \bar{g}_{\infty}$ and because $\psi_{\delta}\geq 1$ we have $A_{\tilde{g}_{\infty}}(S)\geq A_{\bar{g}_{\infty}}(S)>4\pi$ for any embedded sphere isotopic to $S_{0}$.

Thirdly, recall that $\bar{g}_{i}=\phi^{2}_{i}\, g_{0}$ where $\phi_{i}$ is a solution to the L-equation enjoying the upper and lower bounds $\phi_{i,-}\leq \phi_{i} \leq \phi_{i,+}$ where $\phi_{i,\pm}$ are given by (\ref{LB})-(\ref{UB}) with $\hat{x}=\hat{x}_{i}$. 
In particular the conformal factor $\phi_{i}$ restricted to the spheres $S_{\hat{x}_{i}/2}:=\{x=\hat{x}_{i}/2\}$ is bounded below by $1-4\hat{\epsilon}/\cosh (\hat{x}_{i}/2)$ and above by $1+4\hat{\epsilon}/\cosh (\hat{x}/2)$. This implies that $A_{\bar{g}_{i}}(S_{\hat{x}_{i}/2})\rightarrow 4\pi$ and therefore that $A_{\tilde{g}_{i}}(S_{\hat{x}_{i}/2})\rightarrow 4\pi$. 
Let $\tilde{S}_{i}\subset [-1+\hat{x}_{i},0)\times {\mathbb S}^{2}$ be the embedded sphere minimizing the $\tilde{g}_{i}$-area among all spheres embedded in $[-1+\hat{x}_{i},0)\times \mathbb{S}^{2}$ and isotopic to $S_{0}$. Such sphere always exists because $([-1+\hat{x}_{i},0)\times {\mathbb S}^{2},\tilde{g}_{i})$ has strictly mean convex boundary and is asymptotically flat \cite{MR678484}. Moreover, as $A_{\tilde{g}_{i}}(S_{\hat{x}_{i}/2})\rightarrow 4\pi$ and as $A_{\tilde{g}_{i}}(\tilde{S}_{i})\geq 4\pi$ for all $i$ then we must have $A_{\tilde{g}_{i}}(\tilde{S}_{i})\rightarrow 4\pi$.  

On the other hand every surface $\tilde{S}_{i}$ must intersect $[-1+\hat{x}_{i},\hat{x}_{i}]\times {\mathbb S}^{2}$, which is the domain where $\tilde{g}_{i}$ differs from $\bar{g}_{i}$, otherwise $\tilde{S}_{i}$ would be $\bar{g}_{i}$-minimal and stable which is against the assumption. Now, take another subsequence if necessary in such a way that $\tilde{S}_{i}$ converges to a $\tilde{g}_{\infty}$-minimal and stable sphere intersecting $[-1,0]\times {\mathbb S}^{2}$ (inside the limit space) and isotopic to $S_{0}:=0\times \mathbb{S}^{2}$. As discussed before we must have $A_{\tilde{g}_{\infty}}(\tilde{S}_{\infty})>4\pi$ and at the same time $A_{\tilde{g}_{\infty}}(\tilde{S}_{\infty})=\lim A_{\tilde{g}_{i}}(\tilde{S}_{i})=4\pi$ which is a contradiction.
\end{proof}
  
\section{Proof of the main result}\label{PMR}

We are ready to prove the main result of this article. For the convenience of the reader we restate it below. 

\begin{figure}[h]
\centering
\includegraphics[width=7.7cm,height=5.7cm]{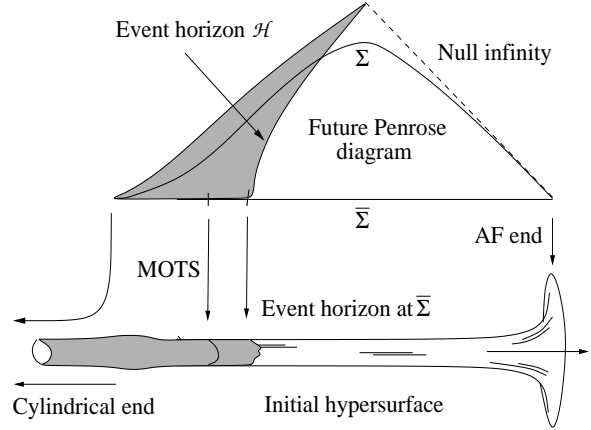}
\caption{Picture of the geometric construction in the argument by contradiction of the proof of Theorem \ref{T1}.}
\label{Figure6}
\end{figure}
\begin{Theorem}\label{T1}For any $\bar{\varepsilon}>0$ and integer $k\geq 1$ there is a smooth and maximal electro-vacuum data set $(\bar{\Sigma};\bar{g},\bar{K};\bar{E},\bar{B})$, $\bar{\varepsilon}$-close in $C^{k}$ to the standard ERN initial data and falling into it exponential along the cylindrical end, which cannot decay, towards the future or the past, into any EKN solution.
\end{Theorem}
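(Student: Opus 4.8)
The plan is to combine, by a short argument by contradiction, the perturbations with a MOTS from Section \ref{SECDSU} with the area lower bound of Proposition \ref{PAS} and Hawking's area theorem.

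First I would fix $\bar\varepsilon>0$ and $k\geq 1$ and choose the parameters of the construction: by Proposition \ref{P1} pick $\hat\epsilon\in(0,1/16)$ so small that $\varepsilon(\hat\epsilon,k)\leq\bar\varepsilon$, and then by Proposition \ref{P2} pick $\hat x\leq\hat x_{0}(\hat\epsilon)$. Let $(\bar\Sigma;\bar g,\bar K;\bar E,\bar B)$ be the data set built in Section \ref{SECDS} out of this $\hat\epsilon$ and $\hat x$. By construction it is smooth, maximal and electro-vacuum, it is $\bar\varepsilon$-close in $C^{k}$ to the standard ERN$_{1}$ data, it falls off exponentially into it along the cylindrical end (the remark closing Section \ref{SECDS}), and by Proposition \ref{P2} it contains an embedded minimal and stable sphere $M$ separating the two ends, which is a MOTS both to the past and to the future since $\bar K=0$. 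Because $\bar K=0$ and $\bar B=0$, the discussion of item (iii) in Section \ref{BACKM} shows that the conserved quantities are $Q_{\rm E}=1$, $Q_{\rm M}=0$ and $J=0$, and the data set is of the ERN$_{1}$ type, so Proposition \ref{PAS} applies: every compact boundaryless embedded surface in $\bar\Sigma$ that is non-contractible has area strictly greater than $4\pi$.

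Next I would argue by contradiction: assume the maximal globally hyperbolic development decays towards the future into some EKN solution (the past case is symmetric, using that $M$ is a MOTS to the past and replacing the future event horizon $\mathcal H^{+}$ by the past one). Since $Q_{\rm E}$, $Q_{\rm M}$ and $J$ are conserved over every sphere isotopic to a sphere at spatial infinity, the limiting EKN has $Q_{\rm E}=1$, $Q_{\rm M}=0$, $J=0$, hence is ERN$_{1}$, whose (degenerate) event horizon has area exactly $4\pi$. On the other hand $M$, being a MOTS, lies in the black-hole region, so $\mathcal H^{+}$ intersects $\bar\Sigma$ in a compact surface enclosing $M$ together with the cylindrical end; this cross-section therefore separates the two ends of $\bar\Sigma$, is non-contractible, and by Proposition \ref{PAS} has area strictly greater than $4\pi$. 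Because electro-vacuum space-times satisfy the null energy condition, the area theorem \cite{Chrusciel:2000cu} makes the areas of cross-sections of $\mathcal H^{+}$ non-decreasing towards the future, so all later cross-sections also have area strictly greater than $4\pi$ and cannot converge to $4\pi$ --- contradicting decay into ERN$_{1}$. This would complete the proof.

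The hard part will be the global causal-theoretic step linking the quasi-local MOTS $M$ on the initial slice to the global event horizon of the development: one must know that $M$ sits inside the black-hole region, that the development carries enough asymptotic structure for ``event horizon'' and ``decay into EKN'' to be meaningful, and that $\mathcal H^{+}\cap\bar\Sigma$ is a compact surface separating the ends, so that Proposition \ref{PAS} and the area theorem can be invoked. Everything else --- the choice of parameters, the identification of the conserved charges, and the monotonicity-versus-lower-bound comparison --- is bookkeeping built on the propositions already proven.
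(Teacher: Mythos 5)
Your overall strategy matches the paper's: perturb with the Section \ref{SECDS} data, use Proposition \ref{P1}--\ref{P2} to get smallness and a separating MOTS $M$, use charge/angular-momentum conservation ($Q_{\rm E}=1$, $Q_{\rm M}=0$, $J=0$) to force the putative EKN end-state to be ERN$_{1}$ with horizon area $4\pi$, and then contradict this with Proposition \ref{PAS} plus monotonicity of horizon areas. The parameter selection, the role of $M$ as a barrier, and the directional symmetry (past versus future) are all handled as in the paper.

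However, there is a real gap where you apply Proposition \ref{PAS} directly to the cross-section $\bar H=\mathcal H\cap\bar\Sigma$. Proposition \ref{PAS} is stated for \emph{smooth} compact embedded surfaces, but an event horizon is in general only Lipschitz, and, as the paper recalls from \cite{Chrusciel:2000cu}, its intersection with a Cauchy slice is a priori only a two-rectifiable set with a well-defined area. The paper handles this in two steps. First, monotonicity gives $A(\bar H)\leq\lim_{\Sigma\uparrow}A(\mathcal H\cap\Sigma)=4\pi$. Then: if $A(\bar H)=4\pi$, the area is constant to the future, Theorem 6.1 of \cite{Chrusciel:2000cu} forces $\mathcal H$ (hence $\bar H$, by transversality) to be smooth, and Proposition \ref{PAS} then gives $A(\bar H)>4\pi$, a contradiction; if instead $A(\bar H)<4\pi$, one uses geometric measure theory to produce a \emph{smooth} area-minimizer among compact rectifiable sets separating the two ends, whose area is $\leq A(\bar H)<4\pi$, again contradicting Proposition \ref{PAS}. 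Your ``hard part'' paragraph gestures toward needing $\bar H$ to be ``a compact surface'' so Proposition \ref{PAS} can be invoked, but it does not identify the non-smoothness of the horizon as the actual obstruction, nor supply the two-case analysis / GMT argument the paper uses to get around it. Without that, the step ``by Proposition \ref{PAS} the cross-section has area $>4\pi$'' does not follow.
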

\begin{proof}[\bf Proof.]  Set $\hat{\epsilon}$ be small enough in such a way that the $\varepsilon(\hat{\epsilon},k)$ provided by Proposition \ref{P1} is less or equal than $\bar{\varepsilon}$. Let then $\hat{x}$ be any number less or equal than the $\hat{x}(\hat{\epsilon})$ provided by Proposition \ref{P2} and let $(\bar{g},\bar{K}=0;\bar{E},\bar{B}=0)$ be the axisymmetric and time symmetric data set constructed in Section \ref{SECDS} out of $\hat{\epsilon}$ and $\hat{x}$. By Proposition \ref{P1} such data set is $\bar{\varepsilon}$-close in $C^{k}$ to the standard ERN initial data. Its total electromagnetic charges are $Q_{\rm E}=1$, $Q_{\rm M}=0$ and the total angular momentum is $J=0$.
Moreover the data set falls off exponentially towards the background data set $(g_{0},K_{0};E_{0},B_{0})$ along the cylindrical end as explained at the end of Section \ref{SECDS}.
Also, by Proposition \ref{P2}, such data set possess a stable minimal surface $M$ separating the two ends, which is therefore a future and past MOTS.  For this reason the following argument applies equally to the future and to the past. Here we will argue only to the future. The future globally hyperbolic development of the initial data will be denoted by $({\bf M}^{+};{\bf g})$.

Suppose now that the future evolution of the initial data set $(\bar{g},\bar{K};\bar{E},\bar{B})$ decays into a EKN space-time. In such case $M$ acts as a barrier preventing the event horizon ${\mathcal H}$ to enter the region in $\bar{\Sigma}$ enclosed between $M$ and the cylindrical end \footnote{Because of the presence of $M$ the space-time ${\bf M}^{+}$ must have a horizon, namely $\partial (J^{-}({\mathscr S}^{+})\cap ({\bf M}^{+} \setminus \bar{\Sigma})) \neq \emptyset$. Of course we assume the existence of a Scri as in \cite{Chrusciel:2000cu} to ensure the monotonicity of the horizon's areas.} (see Figure \ref{Figure6}). In particular the intersection $\bar{H}$ between ${\mathcal H}$ and the initial hypersurface $\bar{\Sigma}$ is a compact set in $\bar{\Sigma}$ separating its two ends. 

As proved in \cite{Chrusciel:2000cu} ({\bf Proposition 3.4}) the intersection $H={\mathscr H}\cap \Sigma$ between the event horizon ${\mathscr H}$ and a Cauchy hypersurface $\Sigma$ is a two-rectifiable set of well defined area (${\mathcal H}^{2}$-Hausdorff measure). Moreover for any two Cauchy hypersurfaces $\Sigma_{1}$ and $\Sigma_{2}$, with $\Sigma_{2}$ strictly to the future of $\Sigma_{1}$, we have $A(H_{2})\geq A(H_{1})$ ($H_{i}={\mathscr H}\cap \Sigma_{i},\ i=1,2$) and if equality holds then the part of ${\mathscr H}$  between $\Sigma_{1}$ and $\Sigma_{2}$ is smooth ({\bf Theorem 6.1} in \cite{Chrusciel:2000cu}).
This monotonicity allows us to define the ``future limit of the areas of the horizon's sections'', denoted here by $\lim_{\Sigma\uparrow} A({\mathcal H}\cap \Sigma)$, in the following simple manner. Take any sequence of Cauchy hypersurfaces $\Sigma_{i}$ such that, (i) $\Sigma_{i'}$ lies strictly to the future of $\Sigma_{i}$ when $i'>i$, and (ii) for any $p\in {\bf M}^{+}$ there is $i(p)$ such that for all $i\geq i(p)$ the point $p$ does not lie in the future of $\Sigma_{i}$. Then, define
\ben
\lim_{\Sigma\uparrow}\, A({\mathcal H}\cap \Sigma):=\lim_{i\uparrow \infty}\, A({\mathcal H}\cap \Sigma_{i}).
\een
It is easily checked that this definition does not dependent on the sequence $\Sigma_{i}$.

Now, if the future evolution of the initial data decay into an extreme EKN solution, then, as the electromagnetic charges and the angular momentum are conserved, the EKN limit must necessarily be ERN$_{1}$ and we must have $\lim_{\Sigma\uparrow}\, A({\mathcal H}\cap \Sigma)=4\pi$. Hence $A(\bar{H})\leq \lim_{\Sigma\uparrow} A({\mathcal H}\cap \Sigma)=4\pi$. If $A(\bar{H})=4\pi$ then $4\pi=A({\mathcal H}\cap \Sigma)$ for all $\Sigma$ stricly in the future of $\bar{\Sigma}$ and the whole ${\mathscr H}$ must be smooth. This implies that $\bar{H}=\mathcal{H}\cap \bar{\Sigma}$ is also smooth because ${\mathcal H}$ and $\bar{\Sigma}$ intersect transversely \footnote{In principle $\bar{H}$ can have several connected components, but at least one of them must separate the two ends of $\bar{\Sigma}$.}. Proposition \ref{PAS} then tells us that $A(\bar{H})>4\pi$ and we reach a contradiction. Hence it must be $A(\bar{H})<4\pi$. 
On the other hand the initial hypersurface $(\bar{\Sigma},\bar{g})$ has one asymptotically flat end and one cylindrical end asymptotic to the metric product of $\mathbb{R}$ and the unit two-sphere $\mathbb{S}^{2}$ (which has area $4\pi$). On these grounds and based on general results of geometric measure  theory \cite{MR756417}, (see also \cite{MR678484}), we can guarantee the existence of a smooth area-minimizer in the class of compact two-rectifiable sets separating the two ends. Such minimizer must have area less than $4\pi$ because $A(\bar{H})<4\pi$ and because $\bar{H}$ is rectifiable and separating. By Proposition \ref{PAS} the area of the smooth minimizer must be greater than $4\pi$ and we reach again a contradiction. It follows that the future evolution of the initial data cannot decay into a EKN solution.
\end{proof}

\section{Acknowledgment} We would like to thank Sergio Dain for useful suggestions in the early stages of this article. 

\bibliographystyle{plain}
\bibliography{Master}

\begin{thebibliography}{10}

\bibitem{Aretakis:2012ei}
Stefanos Aretakis.
\newblock {Horizon Instability of Extremal Black Holes}.
\newblock 2012.

\bibitem{MR0092067}
N.~Aronszajn.
\newblock A unique continuation theorem for solutions of elliptic partial
  differential equations or inequalities of second order.
\newblock {\em J. Math. Pures Appl. (9)}, 36:235--249, 1957.

\bibitem{Bizon:2012we}
Piotr Bizon and Helmut Friedrich.
\newblock {A remark about wave equations on the extreme Reissner-Nordstr\'om
  black hole exterior}.
\newblock {\em Class.Quant.Grav.}, 30:065001, 2013.

\bibitem{MR824240}
C{\'e}sar Camacho and Alcides Lins~Neto.
\newblock {\em Geometric theory of foliations}.
\newblock Birkh\"auser Boston Inc., Boston, MA, 1985.

\bibitem{MR0465047}
Brandon Carter.
\newblock Black hole equilibrium states.
\newblock In {\em Black holes/{L}es astres occlus (\'{E}cole d'\'{E}t\'e
  {P}hys. {T}h\'eor., {L}es {H}ouches, 1972)}, pages 57--214. Gordon and
  Breach, New York, 1973.

\bibitem{MR583716}
Yvonne Choquet-Bruhat and James~W. York, Jr.
\newblock The {C}auchy problem.
\newblock In {\em General relativity and gravitation, {V}ol. 1}, pages 99--172.
  Plenum, New York, 1980.

\bibitem{Chrusciel:2000cu}
Piotr~T. Chrusciel, Erwann Delay, Gregory~J. Galloway, and Ralph Howard.
\newblock {The Area theorem}.
\newblock {\em Annales Henri Poincare}, 2:109--178, 2001.

\bibitem{Chrusciel-Mazzeo}
Piotr~T. Chrusciel and Rafe Mazzeo.
\newblock {Initial data sets with ends of cylindrical type: I. The Lichnerowicz
  equation}.
\newblock {\em arXiv:1201.4937}.

\bibitem{MR2780140}
Tobias~Holck Colding and William~P. Minicozzi, II.
\newblock {\em A course in minimal surfaces}, volume 121 of {\em Graduate
  Studies in Mathematics}.
\newblock American Mathematical Society, Providence, RI, 2011.

\bibitem{Dafermos:2013bua}
Mihalis Dafermos, Gustav Holzegel, and Igor Rodnianski.
\newblock {A scattering theory construction of dynamical vacuum black holes}.
\newblock 2013.

\bibitem{Dain:2011mv}
Sergio Dain.
\newblock {Geometric inequalities for axially symmetric black holes}.
\newblock {\em Class.Quant.Grav.}, 29:073001, 2012.

\bibitem{Dain:2012qw}
Sergio Dain and Gustavo Dotti.
\newblock {The wave equation on the extreme Reissner-Nordstr\'om black hole}.
\newblock {\em Class.Quant.Grav.}, 30:055011, 2013.

\bibitem{Dain:2011kb}
Sergio Dain, Jose~Luis Jaramillo, and Martin Reiris.
\newblock {Area-charge inequality for black holes}.
\newblock {\em Class.Quant.Grav.}, 29:035013, 2012.

\bibitem{Figueras:2011he}
Pau Figueras, Keiju Murata, and Harvey~S. Reall.
\newblock {Black hole instabilities and local Penrose inequalities}.
\newblock {\em Class.Quant.Grav.}, 28:225030, 2011.

\bibitem{MR678484}
William Meeks, III, Leon Simon, and Shing~Tung Yau.
\newblock Embedded minimal surfaces, exotic spheres, and manifolds with
  positive {R}icci curvature.
\newblock {\em Ann. of Math. (2)}, 116(3):621--659, 1982.

\bibitem{Murata:2013daa}
Keiju Murata, Harvey~S. Reall, and Norihiro Tanahashi.
\newblock {What happens at the horizon(s) of an extreme black hole?}
\newblock {\em Class.Quant.Grav.}, 30:235007, 2013.

\bibitem{MR1068795}
Nobumitsu Nakauchi.
\newblock Compactness of the space of incompressible stable minimal surfaces
  without boundary.
\newblock {\em J. Math. Kyoto Univ.}, 30(2):343--346, 1990.

\bibitem{ReirisZTBH}
Martin Reiris.
\newblock {On extreme Kerr-throats and zero temperature black-holes}.
\newblock {\em To appear in CQG. arXiv:1209.4530}.

\bibitem{MR756417}
Leon Simon.
\newblock {\em Lectures on geometric measure theory}, volume~3 of {\em
  Proceedings of the Centre for Mathematical Analysis, Australian National
  University}.
\newblock Australian National University Centre for Mathematical Analysis,
  Canberra, 1983.

\end{thebibliography}

\end{document}